\newtheorem{theorem}{Theorem}
\newtheorem*{theorem*}{Theorem~2}
\newtheorem{lemma}[theorem]{Lemma}
\newtheorem{observation}[theorem]{Observation}
\newcommand{\FF}{\mathbb{F}}
\newcommand{\NN}{\mathbf{N}}
\newcommand{\RR}{\mathbf{R}}
\newcommand{\FR}{\mathcal{F}}
\newcommand{\AR}{\mathcal{A}}
\newcommand{\unlab}[2]{\left\llbracket #1\right\rrbracket_{#2}}
\newcommand{\abs}[1]{\left\lvert#1\right\rvert}
\newcommand{\littleo}[1]{o\mathopen{}\left(#1\right)}
\newcommand{\bigo}[1]{O\mathopen{}\left(#1\right)}
\DeclareMathOperator{\im}{Im}
\renewcommand{\le}{\leqslant}
\renewcommand{\ge}{\geqslant}
\renewcommand{\leq}{\leqslant}
\renewcommand{\geq}{\geqslant}
\begin{document}
\title{A new bound for the 2/3 conjecture\thanks{This work was done in the framework of LEA STRUCO.}}
\author{Daniel Kr\'al'\thanks{Institute of Mathematics, DIMAP and Department of Computer Science, University of Warwick, Coventry CV4 7AL, United Kingdom. Previous affiliation: Institute of Computer Science (IUUK), Faculty of Mathematics and Physics, Charles University, Malostransk\'e n\'am\v est\'\i{} 25, 118 00 Prague 1, Czech Republic. E-mail: \texttt{D.Kral@warwick.ac.uk}. The work of this author leading to this invention has received funding from the European Research Council under the
European Union's Seventh Framework Programme (FP7/2007-2013)/ERC grant agreement no.~259385.}
\and
    Chun-Hung Liu\thanks{School of Mathematics, Georgia Institute of Technology,
    Atlanta, GA 30332, USA. E-mail: \texttt{cliu87@math.gatech.edu}.}
\and
	Jean-S{\'e}bastien Sereni\thanks{CNRS (LORIA), Nancy, France.
    E-mail: \texttt{sereni@kam.mff.cuni.cz}. This author's work was partially supported
    by the French \emph{Agence Nationale de la Recherche} under reference
    \textsc{anr 10 jcjc 0204 01}.}
\and
	Peter Whalen\thanks{School of Mathematics, Georgia Institute of Technology, Atlanta,
    GA 30332-0160, USA. E-mail: \texttt{pwhalen3@math.gatech.edu}.}
\and
	Zelealem B. Yilma\thanks{LIAFA (Universit\'e Denis Diderot), Paris, France.
    E-mail: \texttt{Zelealem.Yilma@liafa.jussieu.fr}. This author's work was
    supported by the French \emph{Agence Nationale de la Recherche} under reference
    \textsc{anr 10 jcjc 0204 01}.}
	}
\date{}
\maketitle
\begin{abstract}
We show that any $n$-vertex complete graph with edges colored with three colors
contains a set of at most four vertices such that the number of the neighbors of
these vertices in one of the colors is at least $2n/3$.
The previous best value, proved by Erd\H{o}s, Faudree, Gould, Gy\'arf\'as, Rousseau and Schelp
in $1989$, is $22$. It is conjectured that three vertices suffice.
\end{abstract}

\section{Introduction}
Erd\H{o}s and Hajnal~\cite{ErHa89}
made the observation that for a fixed positive integer $t$, a positive
real $\epsilon$, and a graph $G$ on $n > n_0$ vertices,
there is a set of $t$ vertices that have a neighborhood
of size at least $(1-(1+\epsilon)(2/3)^t)n$ in either $G$ or its complement.
They further inquired whether $2/3$ may be replaced by $1/2$.
This was answered in the affirmative by
Erd\H{o}s, Faudree, Gy\'arf\'as and Schelp~\cite{EFG+89},
who not only proved the result but also dispensed with the $(1+\epsilon)$
factor. They also phrased the question as a
problem of vertex domination in a multicolored graph.

Given a color $c$ in an $r$-coloring of the edges of
the complete graph, a subset $A$ of the vertex set
\emph{$c$-dominates} another subset $B$
if, for every $y \in B\setminus A$, there exists a vertex $x \in A$
such that the edge $xy$ is colored $c$.
The subset $A$ \emph{strongly} $c$-dominates $B$ if, in addition, for every
$y\in B\cap A$, there exists a vertex $x\in A$ such that $xy$ is colored $c$.
(Thus, the two notions coincide when $A\cap B=\emptyset$.)
The result of Erd\H{o}s \emph{et al.}~\cite{EFG+89}
may then be stated as follows.
\begin{theorem}
\label{thm-efgs}
For any fixed positive integer $t$
and any  $2$-coloring of the edges
of the complete graph $K_n$ on $n$ vertices,
there exist a color $c$ and a subset $X$ of size at most $t$ such that
all but at most $n/2^t$ vertices of $K_n$ are $c$-dominated by $X$.
\end{theorem}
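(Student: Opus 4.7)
The plan is to prove Theorem~\ref{thm-efgs} by induction on $t$. For $t = 1$, pigeonhole on the $\binom{n}{2}$ edges of $K_n$ furnishes a vertex $v$ whose monochromatic degree in some color $c$ is at least $\lceil(n-1)/2\rceil$, so the singleton $\{v\}$ leaves at most $\lfloor(n-1)/2\rfloor \leq n/2$ vertices not $c$-dominated.

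For the inductive step ($t \geq 2$), I would pick a pivot vertex $v^*$ of majority color $c_v$, so that $|N_{c_v}(v^*)| \geq \lceil(n-1)/2\rceil$, and partition $V \setminus \{v^*\}$ into $A = N_{c_v}(v^*)$ and $B = V \setminus A \setminus \{v^*\}$, with $|B| \leq \lfloor(n-1)/2\rfloor$. Applying the induction hypothesis to the 2-coloring inherited by the complete graph on $B$ yields a color $c_B$ and a set $X_B \subseteq B$ of size $t-1$ leaving at most $|B|/2^{t-1}$ vertices of $B$ un-$c_B$-dominated within that subgraph. When $c_B = c_v$, the set $X = X_B \cup \{v^*\}$ used with color $c_v$ immediately works: $v^*$ $c_v$-covers all of $A$, and any remaining undominated vertex of $V$ must lie in $B$ and not be $c_v$-dominated by $X_B$, of which there are at most $|B|/2^{t-1} \leq n/2^t$.

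The main obstacle is handling the case $c_B = \bar{c_v}$, where augmenting with $v^*$ in color $\bar{c_v}$ covers $B$ and $v^*$ but leaves $A$ uncontrolled, with no a priori bound on $|A \setminus N_{\bar{c_v}}(X_B)|$. My plan is to then invoke the hypothesis also on the complete graph induced by $A$, obtaining $(c_A, X_A)$, and to run a case analysis on the four constellations of $(c_A, c_B)$ while exploiting the size identity $|A| + |B| = n-1$. The symmetric sub-case $c_A = \bar{c_v}$ with $|A| \leq n/2$ is handled by $X_A \cup \{v^*\}$ in color $\bar{c_v}$, since $v^*$ then $\bar{c_v}$-covers $B$ and the recursive bound on $A$ is $|A|/2^{t-1} \leq n/2^t$. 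The delicate residual constellation ($c_A = c_v$, $c_B = \bar{c_v}$ with $|A|$ close to $n-1$) is anti-aligned with the pivot: there one must either re-choose $v^*$ to force a more balanced split, or use structural information about the bipartite coloring between $A$ and $B$ (for instance, that a dense $\bar{c_v}$-bipartite subgraph between $A$ and $B$ would let $X_B$ already $\bar{c_v}$-cover most of $A$). Controlling this color-consistency issue across the two recursive calls — and extracting the factor-two improvement per induction step out of the split $|A| + |B| = n-1$ — is the crux of the argument.
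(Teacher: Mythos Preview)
The paper does not actually prove Theorem~\ref{thm-efgs}; it is quoted from~\cite{EFG+89} as background, so there is no in-paper argument to compare against. I therefore assess your outline on its own.

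Your outline has a genuine gap exactly where you flag ``the crux''. In the constellation $c_A=c_v$, $c_B=\bar c_v$, neither escape route you propose closes. Re-choosing the pivot is not justified without a rule saying \emph{which} pivot avoids the bad case, and none is evident --- the two recursive calls can return opposite colours for every choice of~$v^*$. The bipartite-structure idea also fails: nothing in your hypotheses controls the $\bar c_v$-density between $A$ and $B$, so $X_B$ may well have no $\bar c_v$-neighbours in $A$ whatsoever. Fundamentally, your two recursive calls yield dominating sets tuned to \emph{different} colours, and since the output must use a single colour, the identity $|A|+|B|=n-1$ cannot by itself manufacture the missing factor of~$2$.

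The classical argument avoids splitting the vertex set. Apply the induction hypothesis to all of $K_n$ to obtain $Y$ of size $t-1$, a colour $c$, and an undominated set $U$ with $|U|\le n/2^{t-1}$. If some vertex $v$ has at least $|U|/2$ colour-$c$ neighbours in $U$, then $Y\cup\{v\}$ finishes in colour~$c$. Otherwise every vertex of $K_n$ has fewer than $|U|/2$ colour-$c$ neighbours in~$U$; now discard $Y$ entirely, draw $t$ independent uniform samples from $U$, and note that each fixed vertex fails to be $\bar c$-dominated with probability below $2^{-t}$, so by linearity of expectation some $t$-set in $U$ leaves fewer than $n/2^t$ vertices $\bar c$-undominated. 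The idea you are missing is precisely this: when greedy augmentation in colour $c$ stalls, abandon the inductive set and switch colours wholesale via an averaging argument over~$U$.
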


In a more general form, they asked:
\emph{Given positive integers $r$, $t$, and $n$ along with an $r$-coloring of the edges of
the complete graph $K_n$ on $n$ vertices,
what is the largest subset $B$ of the vertices of $K_n$ necessarily
monochromatically dominated by some $t$-element subset of $K_n$?}
However, in the same paper~\cite{EFG+89},
 the authors presented a $3$-coloring of the edges of $K_n$ --- attributed to
 Kierstead --- which shows that if $r\ge3$, then it is not possible
 to monochromatically dominate all but a small fraction of the vertices with
 any fixed number $t$ of vertices.
 This $3$-coloring is defined as follows:
 the vertices of $K_n$ are partitioned into three sets $V_1, V_2, V_3$ of
 equal sizes
and an edge $xy$ with $x \in V_i$ and $y \in V_j$ is colored $i$ if $1 \leq i \leq j \leq 3$
and $j-i\le 1$ while edges between $V_1$ and $V_3$ are colored $3$.
Observe that, if $t$ is fixed, then at most $2n/3$ vertices may be monochromatically dominated.

In the other direction,
it was shown in the follow-up paper of
Erd\H{o}s, Faudree, Gould, Gy\'arf\'as, Rousseau and Schelp~\cite{EFG+90},
that if $t \geq 22$,
then, indeed, at least $2n/3$ vertices are monochromatically dominated in any
$3$-coloring of the edges of $K_n$.
The authors then
ask if $22$ may be replaced by a smaller number (specifically, $3$).
We prove here that $t\geq4$ is sufficient.
\begin{theorem}\label{thm:main}
For any $3$-coloring of the edges of $K_n$, where $n\ge2$,
there exist a color $c$ and a subset $A$ of at most four vertices
of $K_n$ such that $A$ strongly $c$-dominates at least $2n/3$ vertices of $K_n$.
\end{theorem}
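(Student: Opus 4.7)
My plan is to combine a greedy-type selection of vertices with a stability analysis around Kierstead's extremal construction.

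First, an averaging observation yields some vertex $v_0$ and color $c$ with $\abs{N_c(v_0)} \geq (n-1)/3$, and I would choose such a pair with $\abs{N_c(v_0)}$ maximum. If $\abs{N_c(v_0)} \geq 2n/3$, then taking $A = \{v_0, u\}$ for any $c$-neighbor $u$ of $v_0$ already strongly $c$-dominates at least $2n/3$ vertices (the second vertex serves to meet the strong-domination requirement on $v_0$ itself), and we are done. Otherwise, set $R_0 = V(K_n) \setminus (N_c(v_0) \cup \{v_0\})$ and note $\abs{R_0} > n/3$.

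Next, I would try to extend $\{v_0\}$ greedily in color $c$: repeatedly add a vertex whose color-$c$ neighborhood outside the current coverage is as large as possible. If in each successive round the greedy vertex $c$-dominates at least a third of the currently uncovered set, then three additions suffice to shrink the uncovered part well below $n/3$, so $\{v_0, v_1, v_2, v_3\}$ is the desired four-vertex set. The interesting case is when this process stalls: after some $v_1, \dots, v_k$ have been chosen, every vertex of the residue $R_k$ has color-$c$ degree into $R_k$ below $\abs{R_k}/3$. Then by pigeonhole the other two colors carry most of the edges inside $R_k$, and $v_0$ sends mostly non-$c$ edges into $R_k$ as well, producing a surplus of edges in a different color that should be convertible into a four-vertex dominating set for that color.

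The crux is the near-extremal case, in which the coloring resembles Kierstead's partition $V_1 \cup V_2 \cup V_3$, where each color's domination number is very close to $2n/3$ with no slack. I would handle this by a stability argument: assuming no color admits a four-vertex set strongly dominating $2n/3$ vertices, deduce that the coloring is symmetric-difference close to Kierstead's template, and then construct an explicit four-set (essentially one vertex from each class plus one auxiliary vertex) that achieves the bound, contradicting the assumption. I expect the main obstacle to be the bookkeeping required to rule out the full family of near-extremal colorings, since any small perturbation of Kierstead's construction must either transfer enough edges to push some color's domination past $2n/3$, or preserve enough of the partition structure to admit the explicit four-vertex construction; making this dichotomy quantitative and checking it for every perturbation pattern is likely where most of the technical effort will concentrate.
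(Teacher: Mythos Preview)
Your proposal is not the paper's approach and, more importantly, it has a genuine gap at exactly the point you flag as ``the main obstacle''.

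The paper proceeds completely differently. It does not argue greedily or by stability in the sense you describe. Instead it proves, via flag algebras with computer-found coefficients (Lemma~\ref{lm-claw}), the structural statement that in any counterexample every vertex is incident with edges of \emph{exactly two} colors. Once that is known, a two-line argument finishes: partition the vertices by their missing color into $V_1,V_2,V_3$ with $\abs{V_1}\ge\abs{V_2}\ge\abs{V_3}$, take $u\in V_1$ and $v\in V_2$, and observe that $\{u,v\}$ strongly $3$-dominates $V_1\cup V_2$, which has size at least $2n/3$.

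Your greedy phase is harmless but carries essentially no weight: the whole problem is what happens when greedy stalls, and there your plan is a wish rather than an argument. Two concrete issues. First, when greedy in color $c$ stalls you learn only that the residue $R_k$ is sparse in color $c$; this is local information about $R_k$ and says nothing about the global structure in the other two colors, so ``switch colors and try again'' can stall just as fast, and you have not explained how the three failures would combine. Second, the inference ``no four-set works $\Rightarrow$ the coloring is close to Kierstead's template'' is precisely the content of the theorem, and you give no mechanism for proving it. The paper's analogue of this step (every vertex in a counterexample sees only two colors) is exactly Lemma~\ref{lm-claw}, and it required the full flag-algebra machinery together with a semidefinite-program certificate; there is no known elementary route. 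Bear in mind that the previous best constant was $22$, obtained in 1989; if a greedy-plus-stability argument brought this to $4$, it would be remarkable, and nothing in your sketch suggests where the missing idea would come from.

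A smaller point: Kierstead's coloring is not itself near a counterexample---a vertex of $V_1$ together with any color-$1$ neighbor already strongly $1$-dominates $V_1\cup V_2$, hitting $2n/3$ exactly---so your ``near-extremal'' framing should be about tightness of the bound, not about proximity to failure. This does not by itself break your strategy, but it does mean the stability statement you would need is subtler than ``close to Kierstead''; the paper's version is the two-color-per-vertex condition, which is both stronger locally and weaker globally than a partition template.
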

In Kierstead's coloring, the number of colors appearing on the edges incident
with any given vertex is precisely $2$. As we shall see later on, this property
plays a central role in our arguments. In this regard,
our proof seems to suggest that Kierstead's coloring is somehow extremal,
giving more credence to the conjecture that
three vertices would suffice to monochromatically dominate a set of size $2n/3$
in any $3$-coloring of the edges of $K_n$.

We note that there exist $3$-colorings of the edges of $K_n$ such that no pair of vertices
monochromatically dominate $2n/3+\bigo{1}$ vertices. This can be seen by realizing that
in a random $3$-coloring, the probability that an arbitrary pair of vertices
monochromatically dominate more than $5n/9+\littleo{n}$ vertices is
$\littleo{1}$ by Chernoff's bound.

Our proof of Theorem~\ref{thm:main} utilizes the flag algebra theory introduced by Razborov,
which has recently led to numerous results in extremal graph and hypergraph theory.
In the following section, we present a brief introduction to the flag algebra
framework. The proof of Theorem~\ref{thm:main} is presented in
Section~\ref{main}.

We end this introduction by pointing out another interesting question:
what happens when one increases $r$, the number of colors?
Constructions in the vein of that of Kierstead --- for example,
partitioning $K_n$ into $s$ parts and using $r=\binom{s}{2}$
colors --- show that the size of dominated sets decreases with increasing $r$.
While it may be difficult to determine the minimum value of $t$ dominating
a certain proportion of the vertices,
it would be interesting to find out whether such constructions do, in fact,
give the correct bounds.

\section{Flag Algebras}
Flag algebras were introduced by Razborov~\cite{Raz07} as
a tool based on the graph limit theory of Lovász and Szegedy~\cite{LoSz06}
and Borgs \emph{et al.}~\cite{BCL+08} to approach
problems pertaining to extremal graph theory. This tool has been successfully
applied to various topics, such as
Turán-type problems~\cite{Raz10}, super-saturation questions~\cite{Raz08},
jumps in hypergraphs~\cite{BaTa11}, the Caccetta-Häggkvist
conjecture~\cite{HKN}, the chromatic number of common graphs~\cite{HHK+12} and the number
of pentagons in triangle-free graphs~\cite{Grz12,HHK+}. This list is far from
being exhaustive and results keep
coming~\cite{Bab,BaTa12,BHLL,CKP+,FaVaa,FaVab,HHN,Hir,KMS12,KMY13,Pik,PiRa}.

Let us now introduce the terminology related to flag algebras
needed in this paper. Since we deal with $3$-colorings of the
edges of complete graphs, we restrict our attention to this particular case.
Let us define a \emph{tricolored graph} to be a complete graph whose
edges are colored with $3$ colors.
If $G$ is a tricolored graph, then $V(G)$ is its vertex-set and
$\abs{G}$ is the number of vertices of $G$.
Let $\FF_\ell$ be the set of non-isomorphic tricolored graphs
with $\ell$ vertices, where two tricolored graphs are considered to be isomorphic
if they differ by a permutation of the vertices and a permutation of
the edge colors. (Therefore, which specific color is used for each edge is
irrelevant: what matters is whether or not pairs of edges are assigned the
same color.)
The elements of $\FF_3$ are shown in Figure \ref{fig-AB}.
We set $\FF\coloneqq\cup_{\ell\in\NN}\FF_\ell$.
Given a tricolored graph $\sigma$,
we define $\FF^{\sigma}_\ell$ to be the set of tricolored graphs $F$
on $\ell$ vertices with a fixed embedding of $\sigma$, that is,
an injective mapping $\nu$ from $V(\sigma)$ to $V(F)$ such that
$\im(\nu)$ induces in $F$ a subgraph
that differs from $\sigma$ only by a permutation of the edge colors.
The
elements of $\FF_{\ell}^{\sigma}$ are usually called \emph{$\sigma$-flags}
within the flag algebras framework.
We set $\FF^{\sigma}\coloneqq\cup_{\ell\in\NN}\FF^{\sigma}_\ell$.

\begin{figure}
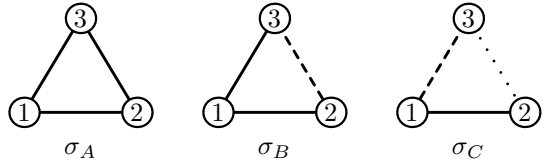

\begin{center}
\includegraphics{twothirds.11}
\hskip 5mm
\includegraphics{twothirds.12}
\hskip 5mm
\includegraphics{twothirds.13}
\end{center}
\caption{The elements of $\FF_3$.
         The edges of color $1$, $2$ and $3$ are represented by solid, dashed and dotted lines, respectively.}
\label{fig-AB}
\end{figure}

The central notions are factor algebras of $\FF$ and $\FF^{\sigma}$ equipped
with addition and multiplication. Let us start with the simpler case of $\FF$.
If $H\in\FF$ and $H'\in\FF_{\abs{H}+1}$, then
$p(H,H')$ is the probability that a randomly chosen subset of
$\abs{H}$ vertices of $H'$ induces a subgraph isomorphic to $H$.
For a set $F$, we define $\RR F$ to be the set of all formal
linear combinations of elements of $F$ with real coefficients.
Let $\AR\coloneqq\RR\FF$ and
let $\FR$ be $\AR$ factorised
by the subspace of $\RR\FF$ generated by all combinations of
the form
\[H-\sum_{H'\in\FF_{\abs{H}+1}}p(H,H')H'.\]

Next, we define the multiplication on $\AR$ based on the elements of $\FF$ as
follows.
If $H_1$ and $H_2$ are two elements of $\FF$ and
$H\in\FF_{\abs{H_1}+\abs{H_2}}$, then $p(H_1,H_2;H)$ is the probability that
two randomly chosen disjoint subsets of vertices of $H$ with sizes $\abs{H_1}$
and $\abs{H_2}$ induce subgraphs isomorphic to $H_1$ and $H_2$, respectively.
We set
\[H_1\cdot H_2\coloneqq\sum_{H\in\FF_{\abs{H_1}+\abs{H_2}}}p(H_1,H_2;H)H.\]
The multiplication is linearly extended to $\RR\FF$. Standard elementary
probability computations~\cite[Lemma 2.4]{Raz07} show that this multiplication in $\RR\FF$ gives rise
to a well-defined multiplication in the factor algebra $\AR$.

The definition of $\AR^\sigma$ follows the same lines.
Let $H$ and $H'$ be two tricolored graphs in $\FF^\sigma$
with embeddings $\nu$ and $\nu'$ of $\sigma$. Informally, we consider
the copy of $\sigma$ in $H'$ and we extend it into an element of
$\FF_{\abs{H}}^{\sigma}$ by randomly choosing additional vertices in
$H'$. We are interested in the probability that this random extension is
isomorphic to $H$ and the isomorphism preserves the embeddings of $\sigma$.
Formally, we let $p(H,H')$ be the probability that $\nu'(V(\sigma))$ together with
a randomly chosen subset of $\abs{H}-\abs{\sigma}$ vertices in
$V(H')\setminus\nu'(V(\sigma))$ induce a subgraph that
is isomorphic
to $H$ through an isomorphism $f$ that preserves the embeddings,
that is, $\nu'=f\circ\nu$.
The set $\AR^\sigma$ is composed of all formal real linear combinations of elements
of $\RR\FF^\sigma$ factorised by the subspace of $\RR\FF^\sigma$ generated by all combinations of
the form
\[H-\sum_{H'\in\FF^\sigma_{\abs{H}+1}}p(H,H')H'.\]
Similarly, $p(H_1,H_2;H)$ is the probability that $\nu(V(\sigma))$ together with
two randomly chosen disjoint subsets of $\abs{H_1}-\abs{\sigma}$ and
$\abs{H_2}-\abs{\sigma}$ vertices in $V(H)\setminus\nu(V(\sigma))$
induce subgraphs isomorphic to $H_1$ and $H_2$, respectively,
with the isomorphisms preserving the embeddings of $\sigma$.
The definition of the product is then analogous to that in $\AR$.

Consider an infinite sequence $(G_i)_{i\in\NN}$ of tricolored
graphs with an increasing number of vertices. Recall that if $H\in\FF$,
then $p(H,G_i)$ is the probability that a randomly chosen subset of $\abs{H}$
vertices of $G_i$ induces a subgraph isomorphic to $H$.
The sequence $(G_i)_{i\in\NN}$ is \emph{convergent} if $p(H,G_i)$ has a limit
for every $H\in\FF$.
A standard argument (using Tychonoff's theorem~\cite{Tyc30}) yields that every infinite sequence of tricolored
graphs has a convergent (infinite) subsequence.

The results presented in this and the next paragraph were established by Razborov~\cite{Raz07}.
Fix now a convergent sequence $(G_i)_{i\in\NN}$ of tricolored graphs.
We set $q(H)\coloneqq\lim_{i\to\infty} p(H,G_i)$ for every $H\in\FF$,
and we linearly extend $q$ to $\AR$.
The obtained mapping $q$ is a homomorphism from $\AR$ to $\RR$.
Moreover, for $\sigma\in\FF$ and an embedding $\nu$ of $\sigma$ in $G_i$,
define $p^\nu_i(H)\coloneqq p(H,G_i)$.
Picking $\nu$ at random thus gives rise to a random distribution of mappings
from $\AR^{\sigma}$ to $\RR$, for each $i\in\NN$.
Since $p(H,G_i)$ converges (as $i$ tends to infinity) for every $H\in\FF$,
the sequence of these distributions must also converge.
In fact, $q$ itself fully determines
the random distributions of $q^\sigma$ for all $\sigma$.
In what follows,
$q^\sigma$ will be a randomly chosen mapping from $\AR^{\sigma}$ to $\RR$
based on the limit distribution. Any mapping $q^\sigma$ from support
of the limit distribution is a homomorphism from $\AR^{\sigma}$ to $\RR$.

Let us now have a closer look at the relation between $q$ and $q^\sigma$.
The ``averaging'' operator
$\unlab{\cdot}{\sigma}\colon \AR^{\sigma} \to \AR$ is a linear operator
defined on the elements of $\FF^\sigma$ by
$\unlab{H}{\sigma}\coloneqq p\cdot H'$, where
$H'$ is the (unlabeled) tricolored graph in $\FF$ corresponding to $H$ and
$p$ is the probability that a random injective mapping
from $V(\sigma)$ to $V(H')$ is an embedding of $\sigma$ in $H'$ yielding $H$.
The key relation between $q$ and $q^\sigma$ is the following:
\begin{equation}
\label{eq-extave}
\forall H\in\AR^\sigma,\quad q(\unlab{H}{\sigma})=\int q^\sigma(H),
\end{equation}
where the integration is over the probability space given
by the limit random distribution of $q^\sigma$.
We immediately conclude that
if $q^\sigma(H)\ge 0$ almost surely,
then $q(\unlab{H}{\sigma})\ge 0$.
In particular,
\begin{equation}\label{eq:square}
\forall H\in\AR^\sigma,\quad q(\unlab{H^2}{\sigma})\ge 0.
\end{equation}

\subsection{Particular Notation Used in our Proof}
\label{notation}
Before presenting the proof of Theorem~\ref{thm:main}, we need to introduce some notation and
several lemmas.
Recall that $\sigma_A$, $\sigma_B$ and $\sigma_C$,
the elements of $\FF_3$, are given in Figure~\ref{fig-AB}.
For $i\in\{A,B,C\}$ and a triple $t\in\{1,2,3\}^3$, let $F^{i}_t$ be
the element of $\FF_4^{\sigma_i}$ in which the unlabeled vertex of $F^i_t$
is joined by an edge of color $t_j$ to the image of the $j$-th vertex of $\sigma_i$ for $j \in \{1,2,3\}$.
Two elements of $\AR^{\sigma_B}$ and two of $\AR^{\sigma_C}$ will be of
interest in our further considerations:
\begin{align*}
w_B &\coloneqq 165 F^B_{113} + 165 F^B_{333} -279 F^B_{123} -44 F^B_{131} +328
F^B_{133} +10 F^B_{233} + 421 F^B_{323},\\
w'_B &\coloneqq -580 F^B_{113} - 580 F^B_{333} +668 F^B_{123} -264 F^B_{131} +10
F^B_{133} +725 F^B_{233} + 632 F^B_{323},\\
w_C &\coloneqq 100 F^C_{112} + 100 F^C_{312} -100 F^C_{113} -100 F^C_{133} +162
F^C_{122} +163 F^C_{221},\quad\text{and}\\
w'_C &\coloneqq -10 F^C_{112} - 10 F^C_{312} +10 F^C_{113} +10 F^C_{133} -77
F^C_{122} +89 F^C_{221}.
\end{align*}

We make use of seven elements $\sigma_1,\ldots,\sigma_7$ out of the $15$
elements of $\FF_4$. They are depicted
in Figure~\ref{fig-cores}.
For $i\in\{1,\ldots,7\}$ and a quadruple $d\in\{1,2,3\}^4$, let $F^{i}_d$
be the element of $\FF_5^{\sigma_i}$ such that the unlabeled vertex of $F^i_d$
is joined by an edge of color $d_j$ to the $j$-th vertex of $\sigma_i$ for $j \in \{1,2,3,4\}$.
If $i \in \{1,\ldots,7\}$ and $c\in\{1,2,3\}$, then
$F^i_{(c)}$ is the element of $\AR^{\sigma_i}$ that is the sum of all the five-vertex
$\sigma_i$-flags $F^i_d$ such that the unlabeled vertex is joined by an edge of color $c$
to at least one of the vertices of $\sigma_i$, i.e., at least one of the entries of $d$
is $c$.

\begin{figure}
\begin{center}
\includegraphics{twothirds.1}
\includegraphics{twothirds.2}
\includegraphics{twothirds.3}
\includegraphics{twothirds.4}
\includegraphics{twothirds.5}
\includegraphics{twothirds.6}
\includegraphics{twothirds.7}
\end{center}
\caption{The elements $\sigma_1,\ldots,\sigma_7$ of $\FF_4$.
         The edges of color $1$, $2$ and $3$ are represented by solid, dashed and dotted lines, respectively.}
\label{fig-cores}
\end{figure}

Finally, we define $H_1,\ldots,H_{142}$ to be the elements of $\FF_5$
in the way depicted in Appendix~\ref{flagfigs}.

\section{Proof of Theorem~\ref{thm:main}}
\label{main}
In this section, we prove Theorem~\ref{thm:main} by contradiction: in a series
of lemmas, we shall prove some properties of a counterexample which
eventually allow us to establish the nonexistence of counterexamples.
Specifically, we first find a number of flag inequalities by hand and then
we combine them with appropriate coefficients to obtain a contradiction. The
coefficients are found with the help of a computer.

Let $G$ be a tricolored complete graph. For a vertex $v$ of $G$,
let $A_v$ be the set of colors of the edges incident with $v$.
Consider a sequence of graphs $(G_k)_{k\in\NN}$, obtained from $G$ by replacing
each vertex $v$ of $G$ with a complete graph of order $k$
with edges colored uniformly at random with colors in $A_v$;
the colors of the edges between the complete graphs corresponding to the vertices
$v$ and $v'$ of $G$ are assigned the color of the edge $vv'$.
This sequence of graphs converges asymptotically almost surely;
let $q_G$ be the corresponding homomorphism from $\AR$ to $\RR$.

Let $n \geq 2$.
We define a \emph{counterexample} to be a tricolored graph with $n$ vertices
such that for every color $c\in\{1,2,3\}$,
each set $W$ of at most four vertices strongly $c$-dominates less than $2n/3$ vertices of $G$.
A counterexample readily satisfies the following property.
\begin{observation}
\label{obs-mintwo}
If $G$ is a counterexample, then every vertex is incident with edges of at least two different colors.
\end{observation}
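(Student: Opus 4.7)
The plan is to argue by contradiction: suppose there is a vertex $v$ in a counterexample $G$ all of whose incident edges share a single color, say $c$. I will then exhibit a set $A$ of size at most four (in fact, of size two) that strongly $c$-dominates every vertex of $G$, which would mean $A$ strongly $c$-dominates $n\ge 2n/3$ vertices, contradicting the defining property of a counterexample.

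The key observation is that a single vertex almost works, but fails to strongly $c$-dominate \emph{itself} because the definition of strong $c$-domination requires, for $y\in B\cap A$, the existence of some $x\in A$ with $xy$ colored $c$, and $v$ has no edge to itself. This is exactly why one extra vertex is needed. So I would pick any other vertex $u\in V(G)\setminus\{v\}$ (which exists since $n\ge 2$) and set $A\coloneqq\{u,v\}$. Since every edge at $v$ is colored $c$, the edge $uv$ has color $c$.

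Now I verify that $A$ strongly $c$-dominates $V(G)$ by checking each vertex $y$: if $y\notin A$ then $vy$ is colored $c$ and $v\in A$ witnesses $y\in B\setminus A$; if $y=v$ then $u\in A$ and $uv$ is colored $c$; and if $y=u$ then $v\in A$ and $vu$ is colored $c$. Hence $A$ strongly $c$-dominates all $n$ vertices, and $n\ge 2n/3$, contradicting the assumption that $G$ is a counterexample.

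I do not anticipate any obstacle here; the only subtlety is remembering that strong $c$-domination (unlike ordinary $c$-domination) forces the covering condition to hold for vertices inside $A$ as well, which is precisely what prevents the one-vertex set $\{v\}$ from finishing the argument and motivates adding the auxiliary vertex $u$.
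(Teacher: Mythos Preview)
Your proof is correct. The paper does not actually write out a proof of this observation---it merely states that a counterexample ``readily satisfies'' the property---so your argument is a careful spelling-out of exactly the intended reasoning. Your attention to the distinction between $c$-domination and \emph{strong} $c$-domination (and hence the need for the second vertex $u$ so that $v$ itself is covered) is precisely the only subtlety here.
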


In the next lemma, we establish an inequality that $q_G$ satisfies if $G$ is a counterexample.
To do so, define the quantity $\varepsilon_c(\sigma_i)$ for $i \in \{1,\ldots,7\}$ and $c\in\{1,2,3\}$
to be $1/2$ if $\sigma_i$ contains a single edge with color $c$,
$-1/3$ if each vertex of $\sigma_i$ is incident with an edge colored $c$,
$1/6$ if $\sigma_i$ contains at least two edges with color $c$ and a vertex incident with edges of a single color different from $c$, and
$0$, otherwise. These values are gathered in Table~\ref{tb-eps}. Let us
underline that, unlike in most of the previous applications of flag algebras,
we do need to deal with second-order terms (specifically, $\bigo{1/n}$ terms)
in our flag inequalities to establish Theorem~\ref{thm:main}.
\begin{table}
\begin{center}
\begin{tabular}{cccccccc}
\toprule
& i=1 & i=2 & i=3 & i=4 & i=5 & i=6 & i=7 \\
\midrule
$c=1$ & -1/3 & 0 & -1/3 & -1/3 & 0 & 0 & 0 \\
$c=2$ & 1/2 & 0 & 1/6 & -1/3 & -1/3 & -1/3 & 0 \\
$c=3$ & 1/2 & 1/2 & 1/2 & 1/2 & 1/2 & 0 & 0 \\
\bottomrule
\end{tabular}
\end{center}
\caption{The values $\varepsilon_c(\sigma_i)$ for $i \in \{1,\ldots,7\}$ and $c\in\{1,2,3\}$.}
\label{tb-eps}
\end{table}

\begin{lemma}
\label{lm-avg0}
Let $G$ be a counterexample with $n$ vertices.
For every $i \in \{1,\ldots,7\}$ and $c\in\{1,2,3\}$, a homomorphism $q_G^{\sigma_i}$
from $\AR^{\sigma_i}$ to $\RR$ almost surely satisfies the inequality
\[q_G^{\sigma_i}(F_{(c)}^i)\le \frac{2}{3}+\frac{\varepsilon_c(\sigma_i)}{n}.\]
\end{lemma}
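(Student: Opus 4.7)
The plan is to compute $q_G^{\sigma_i}(F_{(c)}^i)$ explicitly from the random blow-up sequence $(G_k)$ defining $q_G$, and then combine the resulting formula with the counterexample hypothesis and Observation~\ref{obs-mintwo}.

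As $k\to\infty$, a random $\sigma_i$-embedding $\nu$ in $G_k$ almost surely places the four vertices of $\sigma_i$ in four distinct clouds, giving a $4$-subset $W=\{v_1,\dots,v_4\}\subseteq V(G)$ whose induced subgraph agrees with $\sigma_i$ up to a permutation of colors (which we absorb into the color labeling). Decomposing the asymptotic proportion of vertices of $G_k$ that are color-$c$-adjacent to $\nu(V(\sigma_i))$ according to the cloud of the fifth vertex yields the exact expression
\begin{equation}\label{eq:master}
q_G^{\sigma_i}\!\left(F_{(c)}^i\right) \;=\; \frac{|N_c(W)|}{n} \;+\; \frac{1}{n}\sum_{v_j\in W\setminus N_c(W)} \frac{\mathbf{1}\bigl[c\in A_{v_j}\bigr]}{|A_{v_j}|},
\end{equation}
where $N_c(W)$ denotes the set of vertices of $G$ joined by a color-$c$ edge to some $v_j$. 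Indeed, clouds corresponding to vertices of $N_c(W)$ are fully dominated via the fixed inter-cloud edges, whereas a cloud $v_j\in W\setminus N_c(W)$ can only be dominated through the random internal edge to $\nu(j)$, which receives color $c$ with probability $1/|A_{v_j}|$ if $c\in A_{v_j}$ and with probability zero otherwise.

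Writing $S$ for the sum in~\eqref{eq:master}, the lemma reduces to showing $|N_c(W)|+S\le 2n/3+\varepsilon_c(\sigma_i)$. Since $G$ is a counterexample, the integer $|N_c(W)|$ --- the number of vertices of $G$ strongly $c$-dominated by $W$ --- satisfies $|N_c(W)|<2n/3$, hence $|N_c(W)|\le 2n/3-1/3$ in the worst case ($n\equiv 2\pmod 3$). It therefore suffices to prove the purely local bound $S\le \varepsilon_c(\sigma_i)+1/3$. The key observation is that every $v_j\in W\setminus N_c(W)$ with $c\in A_{v_j}$ satisfies $|A_{v_j}|\ge|C_j|+1$, where $C_j\subseteq\{1,2,3\}\setminus\{c\}$ is the set of colors appearing on the edges of $v_j$ inside $\sigma_i$; this vertex then contributes at most $1/(|C_j|+1)$ to $S$.

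The final step is a case analysis over the twenty-one pairs $(i,c)$, driven by the $\sigma_i$'s drawn in Figure~\ref{fig-cores}. When every vertex of $\sigma_i$ is incident to a color-$c$ edge, $W\subseteq N_c(W)$ gives $S=0=\varepsilon_c(\sigma_i)+1/3$. When $\sigma_i$ has a single color-$c$ edge, exactly two vertices of $W$ lie outside $N_c(W)$, and inspecting the edge-coloring of $\sigma_i$ at each of these two vertices shows that the worst-case sum of local contributions is $1/3+1/2=5/6=\varepsilon_c(\sigma_i)+1/3$. The intermediate cases $\varepsilon_c(\sigma_i)\in\{0,1/6\}$ are treated identically by reading $|C_j|$ directly off $\sigma_i$. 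I expect the main obstacle to be simply executing this finite but delicate check --- verifying for each $\sigma_i$ and each $c$ that the entry in Table~\ref{tb-eps} matches $(\sum_j 1/(|C_j|+1))-1/3$; the conceptual content is already carried by~\eqref{eq:master} and the inequality $|A_{v_j}|\ge|C_j|+1$.
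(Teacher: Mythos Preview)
Your opening claim --- that as $k\to\infty$ a random $\sigma_i$-embedding in $G_k$ almost surely lands in four distinct clouds --- is false. The blow-up $G_k$ has $n$ clouds of size $k$ each, so the probability that four uniform vertices hit four distinct clouds tends to $n(n-1)(n-2)(n-3)/n^4<1$ as $k\to\infty$; conditioning on inducing $\sigma_i$ does not push this to~$1$. Hence the limit distribution of $q_G^{\sigma_i}$ puts positive mass on root configurations with $v_j=v_{j'}$, and your identity~\eqref{eq:master} is wrong there. Concretely, take $\varepsilon_c(\sigma_i)=-1/3$ and suppose the $c$-edge of $\sigma_i$ between vertices $1$ and $2$ is realised with $v_1=v_2$ while $v_1\notin N_c(\{v_3,v_4\})$. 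Then $v_1\in W\setminus N_c(W)$, your sum $S$ is at least $1/\lvert A_{v_1}\rvert>0$, and the target $S\le\varepsilon_c(\sigma_i)+1/3=0$ fails; meanwhile the true contribution of cloud $v_1$ is $1-(1-1/|A_{v_1}|)^2$, larger still than what \eqref{eq:master} assigns.

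The paper's proof meets this case head-on: when $v_j=v_{j'}$ and the edge $w_jw_{j'}$ has colour $c$, one augments the dominated set $V$ by $v_j$ and observes that the augmented $V$ is still strongly $c$-dominated by at most four vertices of $G$ --- since $\{v_1,\dots,v_4\}$ has at most three distinct elements, there is room to adjoin a $c$-neighbour of $v_j$ (which exists because $c\in A_{v_j}$). This keeps $\lvert V\rvert\le 2n/3-1/3$ while absorbing the entire offending cloud into the main term, and is precisely the missing ingredient in your argument. Your local inequality $|A_{v_j}|\ge|C_j|+1$ and the resulting case analysis are correct and match the paper's once the coincident-cloud case has been handled.
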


\begin{proof}
Fix $i\in\{1,\ldots,7\}$ and $c\in\{1,2,3\}$.
Consider the graph $G_k$ for sufficiently large $k$.
Let $(w_1, w_2, w_3, w_4)$ be a randomly selected quadruple of vertices of $G_k$
inducing a subgraph isomorphic to $\sigma_i$. Further, let $W$ be the
set of vertices strongly $c$-dominated by $\{w_1,\ldots,w_4\}$. We show that
$\abs{W}\le \frac{2nk}{3}+\varepsilon_c(\sigma_i)k+o(k)$ with probability
tending to one as $k$ tends to infinity.
This will establish the inequality stated in the lemma. Indeed, it implies
that for every $\eta>0$, there exists $k_\eta$ such that
if $k>k_\eta$, then
$q_{G_k}^{\sigma_i}(F_{(c)}^i)\le\frac{2}{3}+\frac{\varepsilon_c(\sigma_i)}{n}+\eta$
with probability at least $1-\eta$.
As $q_{G_k}^{\sigma_i}(F_{(c)}^i)$ tends to $q_{G}^{\sigma_i}(F_{(c)}^i)$ as
$k$ tends to infinity, we obtain the stated inequality with probability $1$.

For $i \in \{1,2,3,4\}$,
let $v_i$ be the vertex of $G$ corresponding to the clique $W_i$ of $G_k$
containing $w_i$.
Let $V$ be the set of vertices of $G$ that are strongly $c$-dominated by $\{v_1,\ldots,v_4\}$.
Since $G$ is a counterexample, $\abs{V}<2n/3$,
and hence, $\abs{V}\le 2n/3-1/3$.
If $w_j$ and $w_{j'}$ are joined by an edge of color $c$ and, furthermore,
$v_j=v_{j'}$,
then $v_j$ is added to $V$ as well.
Since $V$ is still strongly $c$-dominated
by a quadruple of vertices in $G$ (replace $v_{j'}$ by any of its $c$-neighbors),
it follows that $\abs{V}\le 2n/3-1/3$.

The set $W$ can contain the $\abs{V}k$
vertices of the cliques corresponding to the vertices in $V$, and,
potentially,
it also contains some additional vertices if $w_i$ has no
$c$-neighbors among $w_1,\ldots,w_4$. In this case, the additional
vertices in $W$ are the $c$-neighbors of $w_i$ in $W_i$. With high
probability, there are at most $k/3+o(k)$ such vertices if $v_i$ is
incident with edges of all three colors in $G$, and at most $k/2+o(k)$
if $v_i$ is incident with edges of only two colors in $G$.

If $\varepsilon_c(\sigma_i)=-1/3$, then all the vertices $w_1,\ldots,w_4$
have a $c$-neighbor among $w_1,\ldots,w_4$ and thus $W$ contains only vertices
of the cliques corresponding to the vertices $V$. We conclude that
$\abs{W}\le \frac{(2n-1)k}{3}+o(k)$, as required.

If $\varepsilon_c(\sigma_i)=0$, then all but one of the vertices $w_1,\ldots,w_4$
have a $c$-neighbor among $w_1,\ldots,w_4$ and the vertex $w_j$
that has none is incident in $\sigma_i$ with edges of the two colors different
from $c$. In particular, either $w_j$ has no $c$-neighbors inside $W_j$ or
$v_j$ is incident with edges of three distinct colors in $G$.
This implies that $\abs{W}\le\frac{(2n-1)k}{3}+o(k)$ in the former case and
$\abs{W}\le\frac{2nk}{3}+o(k)$ in the latter case. So, the bound holds.

If $\varepsilon_c(\sigma_i)=1/6$, then all but one of the vertices among $w_1,\ldots,w_4$
have a $c$-neighbor among $w_1,\ldots,w_4$. Let $w_j$ be the exceptional vertex.
Since $w_j$ has at most $k/2+o(k)$ $c$-neighbors in $W_j$,
it follows that $\abs{W}\le\frac{2nk}{3}+\frac{k}{6}+o(k)$.

Finally, if $\varepsilon_c(\sigma_i)=1/2$, then two vertices
$w_j$ and $w_{j'}$ among $w_1,\ldots,w_4$ have
no $c$-neighbors in $\{w_1,\ldots,w_4\}$. The vertices $w_j$ and $w_{j'}$
have at most $k/2+o(k)$ $c$-neighbors each in $W_j$ and $W_{j'}$,
respectively.
Moreover, since
$\sigma_i$ contains edges of all three colors, one of $w_j$ and $w_{j'}$
is incident in $\sigma_i$ with edges of the two colors different from $c$.
Hence, this vertex has at most $k/3+o(k)$ $c$-neighbors in $W_j$.
We conclude that the set $W$ contains at most $\abs{V}k+5k/6+o(k)\le\frac{2nk}{3}+\frac{k}{2}+o(k)$
vertices.
\end{proof}

As a consequence of~(\ref{eq-extave}),
we have the following corollary of Lemma~\ref{lm-avg0}.
\begin{lemma}
\label{lm-avg}
Let $G$ be a counterexample with $n$ vertices.
For every $i \in \{1,\ldots,7\}$ and $c\in\{1,2,3\}$ such that $\varepsilon_c(\sigma_i)\le 0$,
it holds that
\[q_G(\unlab{2\sigma_i/3-F_{(c)}^i}{\sigma_i})\ge 0.\]
\end{lemma}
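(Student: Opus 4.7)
The statement is essentially a direct averaging of the pointwise bound from Lemma~\ref{lm-avg0}, so my plan is short and has no real obstacle; the only thing to be careful about is the sign of the error term and the fact that $\sigma_i$ behaves as a unit inside $\AR^{\sigma_i}$.

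First I would observe that $\sigma_i$, viewed as a $\sigma_i$-flag in $\FF_{\abs{\sigma_i}}^{\sigma_i}$ with the identity embedding, satisfies $q_G^{\sigma_i}(\sigma_i)=1$ almost surely (the trivial random extension is certainly isomorphic to $\sigma_i$ as an embedded flag). Consequently $q_G^{\sigma_i}(2\sigma_i/3)=2/3$, so the inequality furnished by Lemma~\ref{lm-avg0} can be rewritten as
\[
q_G^{\sigma_i}\!\left(\tfrac{2}{3}\sigma_i - F_{(c)}^i\right)\ \ge\ -\,\frac{\varepsilon_c(\sigma_i)}{n}
\]
almost surely, for every $i\in\{1,\ldots,7\}$ and every $c\in\{1,2,3\}$.

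Next I would invoke the hypothesis $\varepsilon_c(\sigma_i)\le 0$. This makes the right-hand side nonnegative, so in fact
\[
q_G^{\sigma_i}\!\left(\tfrac{2}{3}\sigma_i - F_{(c)}^i\right)\ \ge\ 0
\]
almost surely. Finally, I would apply the averaging identity~(\ref{eq-extave}) to the element $\tfrac{2}{3}\sigma_i - F_{(c)}^i \in \AR^{\sigma_i}$: integrating a nonnegative random variable against the limit distribution on embeddings of $\sigma_i$ yields a nonnegative value, so
\[
q_G\!\left(\unlab{2\sigma_i/3 - F_{(c)}^i}{\sigma_i}\right) \ =\ \int q_G^{\sigma_i}\!\left(\tfrac{2}{3}\sigma_i - F_{(c)}^i\right)\ \ge\ 0,
\]
which is exactly the claim. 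There is no real obstacle here; the lemma is genuinely a corollary, and the only point that could trip one up is remembering that the averaging operator commutes with the linear combination and that the $\sigma_i$-flag representing $\sigma_i$ itself evaluates to $1$ under every $q_G^{\sigma_i}$.
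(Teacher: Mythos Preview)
Your proof is correct and matches the paper's own argument, which simply states that the lemma is a consequence of Lemma~\ref{lm-avg0} via the averaging identity~(\ref{eq-extave}). You have merely spelled out the details the paper leaves implicit: that $\sigma_i$ acts as the unit in $\AR^{\sigma_i}$, that the hypothesis $\varepsilon_c(\sigma_i)\le 0$ makes the error term nonnegative, and that integrating the resulting almost-sure nonnegativity yields the claim.
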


We now prove that in a counterexample, at most two colors are
used to color the edges incident with any given vertex. As we shall see, this structural
property of counterexamples directly implies their nonexistence, thereby
proving Theorem~\ref{thm:main}.
\begin{lemma}
\label{lm-claw}
No counterexample contains a vertex incident with edges of all three colors.
\end{lemma}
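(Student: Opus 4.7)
The plan is to argue by contradiction: I would suppose that $G$ is a counterexample with $n$ vertices and that some vertex $v\in V(G)$ is incident with edges of all three colors, and then derive a flag-algebra inequality that cannot hold. The homomorphism $q_G$ already encodes the existence of $v$ as a $\Theta(1/n)$ surplus in the density of certain five-vertex flags: the clique $W_v$ of order $k$ in $G_k$ is colored uniformly at random using all three colors, so any $H_j\in\FF_5$ that contains a vertex incident with all three colors appears with $q_G$-density at least a positive constant times $1/n$. The whole strategy is to show that this surplus cannot be absorbed by the flag-algebra inequalities at our disposal.

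The first source of inequalities is Lemma~\ref{lm-avg}, which gives
\[q_G\!\left(\unlab{\tfrac{2}{3}\sigma_i-F^{i}_{(c)}}{\sigma_i}\right)\ge 0\]
for every pair $(i,c)\in\{1,\ldots,7\}\times\{1,2,3\}$ with $\varepsilon_c(\sigma_i)\le 0$, which can be read off Table~\ref{tb-eps}. The second source is the positivity principle~(\ref{eq:square}): for any positive semidefinite $2\times 2$ matrix $M=(M_{ij})$, the element
\[M_{11}\,w_B^2+2M_{12}\,w_B w'_B+M_{22}\,(w'_B)^2\in\AR^{\sigma_B}\]
is almost-surely nonnegative under $q_G^{\sigma_B}$, whence its $\sigma_B$-average is nonnegative under $q_G$ by~(\ref{eq-extave}); the analogous statement holds for $w_C,w'_C$ in $\AR^{\sigma_C}$. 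The four elements $w_B,w'_B,w_C,w'_C$ given in Section~\ref{notation} are the specific flag polynomials produced by a computer SDP solver.

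Combining all these inequalities with nonnegative coefficients $\alpha_{i,c}$ and two PSD matrices $M_B$ and $M_C$, I would form a single element $S\in\AR$ satisfying $q_G(S)\ge 0$ by construction. The coefficients and the PSD matrices would be chosen so that, once $S$ is expanded in the basis $H_1,\ldots,H_{142}$ of $\FF_5$, each coefficient is nonpositive to leading order, with a strictly negative excess of order $1/n$ on every $H_j$ that contains a vertex incident with edges of all three colors. Combined with the $\Theta(1/n)$ surplus contributed by $v$, this would force $q_G(S)<0$, contradicting $q_G(S)\ge 0$.

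The main obstacle is that the argument lives at second order. Unlike typical flag-algebra proofs where only the leading-order value of $q_G$ matters, here the $\varepsilon_c(\sigma_i)/n$ correction in Lemma~\ref{lm-avg0} is essential, as are the precise $\Theta(1/n)$ densities contributed by the random coloring of the clique $W_v$. Verifying that a single nonnegative combination $S$ has the required sign pattern on all $142$ elements of $\FF_5$, with the correct scaling in $1/n$, is a feasibility check that must be delegated to an SDP solver; the vectors $w_B,w'_B,w_C,w'_C$ specified in Section~\ref{notation} are precisely the certificate produced by that computation.
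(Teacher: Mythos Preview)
Your strategy is the paper's: take a nonnegative combination of the Lemma~\ref{lm-avg} inequalities and the averaged squares $\unlab{w_B^2}{\sigma_B}$, $\unlab{(w'_B)^2}{\sigma_B}$, $\unlab{w_C^2}{\sigma_C}$, $\unlab{(w'_C)^2}{\sigma_C}$ (your PSD matrices $M_B,M_C$ turn out diagonal), and verify by computer that the resulting element of $\RR\FF_5$---called $w_0$ in the paper---has coefficient at most $-1$ on every $H_j$ containing a rainbow vertex and at most $0$ on the rest. With $w_3$ the $\{0,1\}$-indicator of rainbow-vertex flags in $\FF_5$, this gives $q_G(w_3+w_0)\le 0$, hence $q_G(w_3)\le -q_G(w_0)\le 0$, forcing $q_G(w_3)=0$.

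Where your write-up goes astray is the repeated claim that this step ``lives at second order'' and that the coefficients of $S$ must carry a ``strictly negative excess of order $1/n$''. They do not: the inequalities you invoke from Lemma~\ref{lm-avg} and the square terms are all $n$-independent, so $S$ has constant coefficients, and the feasibility check is a sign check on $142$ rational constants (Tables~\ref{tb-claw1} and~\ref{tb-claw2}), not a scaling-in-$1/n$ verification. The second-order work was already spent in Lemma~\ref{lm-avg0} to determine which pairs $(i,c)$ satisfy $\varepsilon_c(\sigma_i)\le 0$ and hence yield the clean inequality of Lemma~\ref{lm-avg}; once that is done, the present lemma needs only the qualitative fact that $q_G(w_3)>0$ whenever a rainbow vertex exists, not its precise $\Theta(1/n)$ value. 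Your plan is correct, but strip out the second-order framing: the contradiction is simply that the constant, strictly negative coefficients of $w_0$ on rainbow flags force $q_G(w_3)=0$.
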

\begin{proof}
Let $G$ be a counterexample and $w_3\in\RR\FF_5$ be the sum of all elements of $\FF_5$ that contain
a vertex incident with at least three colors. By the definition of $q_G$,
the graph $G$ has a vertex incident with edges of all three colors if and only if
$q_G(w_3)>0$. Lemma~\ref{lm-avg} implies that
$q_G(H)$ is non-negative for each element $H$ of $\AR$ corresponding to any
column of Table~\ref{tb-claw1} (in Appendix~\ref{clawtable}). In addition,
\eqref{eq:square} ensures that $q_G(H)$ is also non-negative for each element
$H$ of $\AR$ corresponding to any of the first four columns of
Table~\ref{tb-claw2} (in Appendix~\ref{clawtable}). Note that these elements
can be expressed as elements of $\RR\FF_5$.
Summing these columns with coefficients
\[
\begin{array}{ccccccc}
\frac{23457815885978657985}{1029505785512512},&&
\frac{134730108347752975}{4596007971038},&&
\frac{134730108347752975}{4596007971038},\\
\frac{15852088219609163945}{514752892756256},&&
\frac{196791037567187109905}{12354069426150144},&&
\frac{33245823856447882025}{24708138852300288},\\
\frac{3956624143678293415}{772129339134384},&&
\frac{30762195734543710715}{772129339134384},&&
\frac{20816545085118359705}{4118023142050048},\\
\frac{74313622711306287405}{2059011571025024},&&
\frac{48968798259015}{514752892756256},&&
\frac{39315342699665}{6177034713075072},\\
\frac{15977347300925119}{32944185136400384},&&
\frac{8880723226482731}{24708138852300288},&&
\end{array}
\]
respectively, yields an element
$w_0$ of $\AR$ given in the very last column of Table~\ref{tb-claw2}. Notice
that for every $H\in\FF_5$, the coefficient of $H$ in $-w_0$ is at least the
coefficient of $H$ in $w_3$. In particular, the sum $w_3+w_0$, which
belongs to $\RR\FF_5$, has only non-positive coefficients. We now view both
$w_0$ and $w_3$ as elements of $\AR$ and use that
$q_G$ is a homomorphism from $\AR$ to $\RR$. First of all, $q_G(w_3+w_0)\le0$.
So, we derive that $q_G(w_3)\le-q_G(w_0)$. As noted earlier, $q_G(H)\ge0$ for each
element $H$ used to define $w_0$.
Hence, since none of the above (displayed) coefficients is negative,
we deduce that $q_G(w_0)\ge0$. Consequently, $q_G(w_3)\le 0$,
which therefore implies that $q_G(w_3)=0$. This means that $G$ has no vertex
incident with edges of all three colors.
\end{proof}

We are now in a position to prove Theorem~\ref{thm:main}, whose statement
is recalled below.
\begin{theorem*}
\label{thm-main}
Let $n\ge2$.
Every tricolored graph with $n$ vertices contains a subset of at most four vertices
that strongly $c$-dominates at least $2n/3$ vertices for some color $c$.
\end{theorem*}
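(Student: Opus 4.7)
The plan is to proceed by contradiction: assume that a counterexample $G$ on $n \geq 2$ vertices exists, and combine the two structural properties already in hand (Observation~\ref{obs-mintwo} and Lemma~\ref{lm-claw}) to exhibit a small strongly dominating set directly. First I would put these two results together to conclude that every vertex of $G$ is incident with edges of exactly two of the three colors. This induces a natural partition $V(G) = V_{12} \cup V_{13} \cup V_{23}$, where $V_{ij}$ collects the vertices whose incident edges use precisely the colors $i$ and $j$.

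Second, I would observe that the colors of the edges between distinct classes are then entirely forced: any edge joining $V_{ij}$ and $V_{ik}$ must use the shared color~$i$, because any other color would be incident with an endpoint that does not see it. So all edges between $V_{12}$ and $V_{13}$ have color~$1$, all edges between $V_{12}$ and $V_{23}$ have color~$2$, and all edges between $V_{13}$ and $V_{23}$ have color~$3$. A pigeonhole argument guarantees that the smallest of the three classes has size at most $n/3$; by the symmetry of Theorem~\ref{thm:main} under color permutations, I may assume $\abs{V_{23}} \leq n/3$, so that $\abs{V_{12} \cup V_{13}} \geq 2n/3$. Provided that both $V_{12}$ and $V_{13}$ are nonempty, I pick any $v_{12} \in V_{12}$ and any $v_{13} \in V_{13}$: the edge $v_{12}v_{13}$ has color~$1$, and every other vertex of $V_{12} \cup V_{13}$ is joined by a color-$1$ edge to the member of $\{v_{12}, v_{13}\}$ lying in the opposite class. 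Hence $\{v_{12}, v_{13}\}$ is a $2$-element set that strongly $1$-dominates at least $2n/3$ vertices of $G$, contradicting the definition of a counterexample.

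What remains is bookkeeping for the degenerate situations in which one or more of the three classes is empty. When exactly one class is empty, the same two-vertex argument still applies with the appropriate shared color and in fact yields strong domination of all $n$ vertices, so nothing new is required. The one case that I expect to be the main obstacle is when only a single class is nonempty: this is precisely the situation where $G$ uses only two of the three colors, and the claim reduces to a strong-domination variant of the classical two-color theorem. Here I would appeal to Theorem~\ref{thm-efgs} with $t=2$ to obtain a set of two vertices that color-$c$-dominates at least $3n/4$ vertices for some color~$c$, and then convert weak to strong domination by losing at most $\bigo{1}$ vertices (either the two chosen vertices are $c$-adjacent and nothing is lost, or one of them is replaced by one of its color-$c$ neighbors); this leaves at least $2n/3$ vertices strongly dominated for all $n$ above a small absolute constant, and the remaining finitely many small values of $n$ can be verified by inspection.
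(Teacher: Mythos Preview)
Your main argument is identical to the paper's: reduce to $\abs{A_v}=2$ for every vertex via Observation~\ref{obs-mintwo} and Lemma~\ref{lm-claw}, partition the vertices into three classes according to the missing color, and pick one vertex from each of the two largest classes to strongly dominate their union in the shared color. The paper's proof stops there (tacitly assuming both of those classes are nonempty); you go further and address the degenerate situation in which two classes are empty, i.e., only two of the three colors actually appear in $G$.

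Your handling of that degenerate case has a gap, however. Replacing one of the two chosen vertices $x_1,x_2$ by a $c$-neighbor does \emph{not} cost only $\bigo{1}$ dominated vertices: the $c$-neighborhood of the replacement vertex can be entirely different from that of the discarded one, so the number of $c$-dominated vertices may drop by a constant fraction of $n$. The clean fix is to \emph{augment} rather than replace. In a counterexample every vertex is incident with both remaining colors (Observation~\ref{obs-mintwo}), so each $x_i$ has a $c$-neighbor $y_i$; the set $\{x_1,x_2,y_1,y_2\}$ then has at most four vertices, every member of it has a $c$-neighbor inside it, and it still $c$-dominates everything that $\{x_1,x_2\}$ did. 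Hence it strongly $c$-dominates at least $3n/4\geq 2n/3$ vertices for every $n\geq 2$, and the small-$n$ inspection step becomes unnecessary.
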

\begin{proof}
Suppose, on the contrary, that there exists a counterexample $G$.
Recall that $A_v$ is the set of colors that appear on the edges incident to
the vertex $v$.
Now, by Observation~\ref{obs-mintwo} and Lemma~\ref{lm-claw},
it holds that $\abs{A_v}=2$ for every vertex $v$ of $G$.
Hence, $V(G)$ can be partitioned into three sets
$V_1$, $V_2$ and $V_3$, where $v \in V_i$ if and only if $i \notin A_v$.
Without loss of generality,
assume that $\abs{V_1} \geq \abs{V_2} \geq \abs{V_3}$.
Pick $u \in V_1$ and $v \in V_2$.
As $A_u \cap A_w = \{3\}$ for all $w \in V_2$,
we observe that $V_2$ is $3$-dominated by $\{u\}$.
Similarly, $V_1$ is $3$-dominated by $\{v\}$.
Therefore, the set $\{u,v\}$ strongly $3$-dominates
$V_1 \cup V_2$, which has size at least $2n/3$.
\end{proof}

\section{Concluding remarks}
It is natural to ask what bound can be proven for domination with three vertices.
Here, it does not seem that the trick we used in this paper helps.
We can prove only that
every tricolored graph with $n$ vertices contains a subset of at most three vertices
that $c$-dominates at least $0.66117n$ vertices for some color $c$.

We believe the difficulty we face is caused by the following phenomenon.
The average number of vertices dominated by a triple isomorphic
to $\sigma_A$ or $\sigma_B$ (see Figure~\ref{fig-AB} for notation)
is bounded away from $2/3$ in the graphs $(G_k)_{k\in\NN}$, which are
described at the beginning of Section~\ref{main}, for $G$ being the rainbow triangle.
So, if any of these two configurations is used, a tight bound cannot be proven
since the inequalities analogous to that in Lemma~\ref{lm-avg} are not tight and
no triple of vertices dominates more than $2/3$ of the vertices in $(G_k)_{k\in\NN}$
to compensate this deficiency.

We see that if we aimed to prove a tight result, we can only average over rainbow
triangles (which are isomorphic to $\sigma_C$).
Now consider the following graph $G$: start from the disjoint union of a large clique
of order $2m$ with all edges colored $1$ and a rainbow triangle. For
$i\in\{1,2\}$, join exactly $m$ vertices of the clique to all three vertices of the rainbow triangle
by edges colored $i$. The obtained simple complete graph has exactly one
rainbow triangle, which dominates about half of the vertices.
Thus, the average proportion of vertices dominated by triples isomorphic to $\sigma_C$
in the graphs $(G_k)_{k\in\NN}$ is close to $1/2$.
This phenomenon does not occur for quadruples of vertices.

\bibliographystyle{siam}
\bibliography{flags}

\newpage
\pagestyle{empty}
\appendix
\section[Appendix]{The Elements of $\FF_5$}
\label{flagfigs}

\begin{center}
\includegraphics[width=15mm]{twothirds.101}
\includegraphics[width=15mm]{twothirds.102}
\includegraphics[width=15mm]{twothirds.103}
\includegraphics[width=15mm]{twothirds.104}
\includegraphics[width=15mm]{twothirds.105}
\includegraphics[width=15mm]{twothirds.106}
\includegraphics[width=15mm]{twothirds.107}
\includegraphics[width=15mm]{twothirds.108}
\includegraphics[width=15mm]{twothirds.109}
\includegraphics[width=15mm]{twothirds.110}
\includegraphics[width=15mm]{twothirds.111}
\includegraphics[width=15mm]{twothirds.112}
\includegraphics[width=15mm]{twothirds.113}
\includegraphics[width=15mm]{twothirds.114}
\includegraphics[width=15mm]{twothirds.115}
\includegraphics[width=15mm]{twothirds.116}
\includegraphics[width=15mm]{twothirds.117}
\includegraphics[width=15mm]{twothirds.118}
\includegraphics[width=15mm]{twothirds.119}
\includegraphics[width=15mm]{twothirds.120}
\includegraphics[width=15mm]{twothirds.121}
\includegraphics[width=15mm]{twothirds.122}
\includegraphics[width=15mm]{twothirds.123}
\includegraphics[width=15mm]{twothirds.124}
\includegraphics[width=15mm]{twothirds.125}
\includegraphics[width=15mm]{twothirds.126}
\includegraphics[width=15mm]{twothirds.127}
\includegraphics[width=15mm]{twothirds.128}
\includegraphics[width=15mm]{twothirds.129}
\includegraphics[width=15mm]{twothirds.130}
\includegraphics[width=15mm]{twothirds.131}
\includegraphics[width=15mm]{twothirds.132}
\includegraphics[width=15mm]{twothirds.133}
\includegraphics[width=15mm]{twothirds.134}
\includegraphics[width=15mm]{twothirds.135}
\includegraphics[width=15mm]{twothirds.136}
\includegraphics[width=15mm]{twothirds.137}
\includegraphics[width=15mm]{twothirds.138}
\includegraphics[width=15mm]{twothirds.139}
\includegraphics[width=15mm]{twothirds.140}
\includegraphics[width=15mm]{twothirds.141}
\includegraphics[width=15mm]{twothirds.142}
\includegraphics[width=15mm]{twothirds.143}
\includegraphics[width=15mm]{twothirds.144}
\includegraphics[width=15mm]{twothirds.145}
\includegraphics[width=15mm]{twothirds.146}
\includegraphics[width=15mm]{twothirds.147}
\includegraphics[width=15mm]{twothirds.148}
\includegraphics[width=15mm]{twothirds.149}
\includegraphics[width=15mm]{twothirds.150}
\includegraphics[width=15mm]{twothirds.151}
\includegraphics[width=15mm]{twothirds.152}
\includegraphics[width=15mm]{twothirds.153}
\includegraphics[width=15mm]{twothirds.154}
\includegraphics[width=15mm]{twothirds.155}
\includegraphics[width=15mm]{twothirds.156}
\includegraphics[width=15mm]{twothirds.157}
\includegraphics[width=15mm]{twothirds.158}
\includegraphics[width=15mm]{twothirds.159}
\includegraphics[width=15mm]{twothirds.160}
\includegraphics[width=15mm]{twothirds.161}
\includegraphics[width=15mm]{twothirds.162}
\includegraphics[width=15mm]{twothirds.163}
\includegraphics[width=15mm]{twothirds.164}
\includegraphics[width=15mm]{twothirds.165}
\includegraphics[width=15mm]{twothirds.166}
\includegraphics[width=15mm]{twothirds.167}
\includegraphics[width=15mm]{twothirds.168}
\includegraphics[width=15mm]{twothirds.169}
\includegraphics[width=15mm]{twothirds.170}
\includegraphics[width=15mm]{twothirds.171}
\includegraphics[width=15mm]{twothirds.172}
\includegraphics[width=15mm]{twothirds.173}
\includegraphics[width=15mm]{twothirds.174}
\includegraphics[width=15mm]{twothirds.175}
\includegraphics[width=15mm]{twothirds.176}
\includegraphics[width=15mm]{twothirds.177}
\includegraphics[width=15mm]{twothirds.178}
\includegraphics[width=15mm]{twothirds.179}
\includegraphics[width=15mm]{twothirds.180}
\includegraphics[width=15mm]{twothirds.181}
\includegraphics[width=15mm]{twothirds.182}
\includegraphics[width=15mm]{twothirds.183}
\includegraphics[width=15mm]{twothirds.184}
\includegraphics[width=15mm]{twothirds.185}
\includegraphics[width=15mm]{twothirds.186}
\includegraphics[width=15mm]{twothirds.187}
\includegraphics[width=15mm]{twothirds.188}
\includegraphics[width=15mm]{twothirds.189}
\includegraphics[width=15mm]{twothirds.190}
\includegraphics[width=15mm]{twothirds.191}
\includegraphics[width=15mm]{twothirds.192}
\includegraphics[width=15mm]{twothirds.193}
\includegraphics[width=15mm]{twothirds.194}
\includegraphics[width=15mm]{twothirds.195}
\includegraphics[width=15mm]{twothirds.196}
\includegraphics[width=15mm]{twothirds.197}
\includegraphics[width=15mm]{twothirds.198}
\includegraphics[width=15mm]{twothirds.199}
\includegraphics[width=15mm]{twothirds.200}
\includegraphics[width=15mm]{twothirds.201}
\includegraphics[width=15mm]{twothirds.202}
\includegraphics[width=15mm]{twothirds.203}
\includegraphics[width=15mm]{twothirds.204}
\includegraphics[width=15mm]{twothirds.205}
\includegraphics[width=15mm]{twothirds.206}
\includegraphics[width=15mm]{twothirds.207}
\includegraphics[width=15mm]{twothirds.208}
\includegraphics[width=15mm]{twothirds.209}
\includegraphics[width=15mm]{twothirds.210}
\includegraphics[width=15mm]{twothirds.211}
\includegraphics[width=15mm]{twothirds.212}
\includegraphics[width=15mm]{twothirds.213}
\includegraphics[width=15mm]{twothirds.214}
\includegraphics[width=15mm]{twothirds.215}
\includegraphics[width=15mm]{twothirds.216}
\includegraphics[width=15mm]{twothirds.217}
\includegraphics[width=15mm]{twothirds.218}
\includegraphics[width=15mm]{twothirds.219}
\includegraphics[width=15mm]{twothirds.220}
\includegraphics[width=15mm]{twothirds.221}
\includegraphics[width=15mm]{twothirds.222}
\includegraphics[width=15mm]{twothirds.223}
\includegraphics[width=15mm]{twothirds.224}
\includegraphics[width=15mm]{twothirds.225}
\includegraphics[width=15mm]{twothirds.226}
\includegraphics[width=15mm]{twothirds.227}
\includegraphics[width=15mm]{twothirds.228}
\includegraphics[width=15mm]{twothirds.229}
\includegraphics[width=15mm]{twothirds.230}
\includegraphics[width=15mm]{twothirds.231}
\includegraphics[width=15mm]{twothirds.232}
\includegraphics[width=15mm]{twothirds.233}
\includegraphics[width=15mm]{twothirds.234}
\includegraphics[width=15mm]{twothirds.235}
\includegraphics[width=15mm]{twothirds.236}
\includegraphics[width=15mm]{twothirds.237}
\includegraphics[width=15mm]{twothirds.238}
\includegraphics[width=15mm]{twothirds.239}
\includegraphics[width=15mm]{twothirds.240}
\includegraphics[width=15mm]{twothirds.241}
\includegraphics[width=15mm]{twothirds.242}
\end{center}

\newpage
\section{Vectors Used in the Proof of Lemma~\ref{lm-claw}}
\label{clawtable}
\small
\begin{center}
\begin{longtable}{ccccccccccc}
\caption{The first ten vectors}\label{tb-claw1}\\
\toprule
&
\begin{tikzpicture}
	\node[rotate=90] {$\unlab{2\sigma_1/3-F_{(1)}^1}{\sigma_1}$};
\end{tikzpicture} &
\begin{tikzpicture}
 	\node[rotate=90] {$\unlab{2\sigma_2/3-F_{(1)}^2}{\sigma_2}$};
 \end{tikzpicture} &
 \begin{tikzpicture}
	\node[rotate=90] {$\unlab{2\sigma_2/3-F_{(2)}^2}{\sigma_2}$};
\end{tikzpicture} &
\begin{tikzpicture}
	\node[rotate=90] {$\unlab{2\sigma_3/3-F_{(1)}^3}{\sigma_3}$};
\end{tikzpicture} &
\begin{tikzpicture}
	\node[rotate=90] {$\unlab{2\sigma_4/3-F_{(1)}^4}{\sigma_4}$};
\end{tikzpicture} &
\begin{tikzpicture}
	\node[rotate=90] {$\unlab{2\sigma_4/3-F_{(2)}^4}{\sigma_4}$};
\end{tikzpicture} &
 \begin{tikzpicture}
	 \node[rotate=90] {$\unlab{2\sigma_5/3-F_{(1)}^5}{\sigma_5}$};
 \end{tikzpicture}  &
 \begin{tikzpicture}
	 \node[rotate=90] {$\unlab{2\sigma_5/3-F_{(2)}^5}{\sigma_5}$};
 \end{tikzpicture} &
 \begin{tikzpicture}
	 \node[rotate=90] {$\unlab{2\sigma_6/3-F_{(1)}^6}{\sigma_6}$};
 \end{tikzpicture} &
 \begin{tikzpicture}
	 \node[rotate=90] {$\unlab{2\sigma_7/3-F_{(2)}^7}{\sigma_7}$};
 \end{tikzpicture} \\
 \midrule
 \endfirsthead
\multicolumn{11}{c}%
{\tablename\ \thetable\ -- \textit{Continued from previous page}} \\
\toprule
&
\begin{tikzpicture}
	\node[rotate=90] {$\unlab{2\sigma_1/3-F_{(1)}^1}{\sigma_1}$};
\end{tikzpicture} &
\begin{tikzpicture}
 	\node[rotate=90] {$\unlab{2\sigma_2/3-F_{(1)}^2}{\sigma_2}$};
 \end{tikzpicture} &
 \begin{tikzpicture}
	\node[rotate=90] {$\unlab{2\sigma_2/3-F_{(2)}^2}{\sigma_2}$};
\end{tikzpicture} &
\begin{tikzpicture}
	\node[rotate=90] {$\unlab{2\sigma_3/3-F_{(1)}^3}{\sigma_3}$};
\end{tikzpicture} &
\begin{tikzpicture}
	\node[rotate=90] {$\unlab{2\sigma_4/3-F_{(1)}^4}{\sigma_4}$};
\end{tikzpicture} &
\begin{tikzpicture}
	\node[rotate=90] {$\unlab{2\sigma_4/3-F_{(2)}^4}{\sigma_4}$};
\end{tikzpicture} &
 \begin{tikzpicture}
	 \node[rotate=90] {$\unlab{2\sigma_5/3-F_{(1)}^5}{\sigma_5}$};
 \end{tikzpicture}  &
 \begin{tikzpicture}
	 \node[rotate=90] {$\unlab{2\sigma_5/3-F_{(2)}^5}{\sigma_5}$};
 \end{tikzpicture} &
 \begin{tikzpicture}
	 \node[rotate=90] {$\unlab{2\sigma_6/3-F_{(1)}^6}{\sigma_6}$};
 \end{tikzpicture} &
 \begin{tikzpicture}
	 \node[rotate=90] {$\unlab{2\sigma_7/3-F_{(2)}^7}{\sigma_7}$};
 \end{tikzpicture} \\
\midrule
\endhead
\bottomrule \multicolumn{11}{r}{\textit{Continued on next page}} \\
\endfoot
\bottomrule
\endlastfoot
$H_{1}$ & 	0 &	0 &	0 &	 0 &	0 &	0 &	0 &	0 &	 0 &	0 \\
$H_{2}$ &	  0 &	 0 &	 0 &	 0 &	 0 &	 0 &	 0 &	 0 &	 0 &	 0 \\
$H_{3}$ & 	 0 &	 0 &	 0 &	 0 &	 0 &	 0 &	 0 &	 0 &	 0 &	 0 \\
$H_{4}$ & 	 -1/90 &	 0 &	 0 &	 0 &	 0 &	 0 &	 0 &	 0 &	 0 &	 0 \\
$H_{5}$ & 0 &	 0 &	 0 &	 0 &	 0 &	 0 &	 0 &	 0 &	 0 &	 0 \\
$H_{6}$ &	 -1/90 &	 -1/180 &	 1/90 &	 0 &	 0 &	 0 &	 0 &	 0 &	 0 &	 0 \\
$H_{7}$ & 0 &	 0 &	 0 &	 0 &	 0 &	 0 &	 0 &	 0 &	 0 &	 0 \\
$H_{8}$ &	 0 &	 -1/60 &	 -1/60 &	 0 &	 0 &	 0 &	 0 &	 0 &	 0 &	 0 \\
$H_{9}$ 	& 0 &	 -1/45 &	 2/45 &	 0 &	 0 &	 0 &	 0 &	 0 &	 0 &	 0 \\
$H_{10}$ 	 &0 &	 0 &	 0 &	 0 &	 0 &	 0 &	 0 &	 0 &	 0 &	 0 \\
$H_{11}$ 	 &0 &	 0 &	 0 &	 -1/90 &	 0 &	 0 &	 0 &	 0 &	 0 &	 0 \\
$H_{12}$ 	& 0 &	 0 &	 0 &	 0 &	 0 &	 0 &	 0 &	 0 &	 0 &	 0 \\
$H_{13}$ 	 &0 &	 0 &	 0 &	 0 &	 0 &	 0 &	 0 &	 0 &	 0 &	 0 \\
$H_{14}$ 	 &-1/90 &	 0 &	 0 &	 0 &	 -1/180 &	 1/90 &	 0 &	 0 &	 0 &	 0 \\
$H_{15}$ 	 &0 &	 0 &	 0 &	 0 &	 0 &	 0 &	 0 &	 0 &	 0 &	 0 \\
$H_{16}$ 	 &-1/180 &	 0 &	 0 &	 -1/180 &	 0 &	 0 &	 -1/360 &	 1/180 &	 0 &	 0 \\
$H_{17}$ 	 &-1/90 &	 0 &	 0 &	 -1/180 &	 0 &	 0 &	 0 &	 0 &	 1/180 &	 0 \\
$H_{18}$ 	& 0 &	 0 &	 0 &	 0 &	 0 &	 0 &	 0 &	 0 &	 0 &	 0 \\
$H_{19}$ 	& -1/180 &	 0 &	 0 &	 0 &	 0 &	 0 &	 1/180 &	 -1/360 &	 0 &	 0 \\
$H_{20}$ 	& -1/90 &	 1/90 &	 -1/180 &	 0 &	 0 &	 0 &	 0 &	 0 &	 0 &	 0 \\
$H_{21}$ 	& -1/180 &	 0 &	 0 &	 -1/180 &	 0 &	 0 &	 0 &	 0 &	 0 &	 1/120 \\
$H_{22}$ 	& 0 &	 0 &	 0 &	 1/180 &	 0 &	 0 &	 0 &	 0 &	 0 &	 0 \\
$H_{23}$ 	& 0 &	 0 &	 0 &	 0 &	 0 &	 0 &	 0 &	 0 &	 0 &	 0 \\
$H_{24}$ 	& 0 &	 0 &	 0 &	 0 &	 0 &	 0 &	 0 &	 0 &	 0 &	 0 \\
$H_{25}$ 	& 0 &	 -1/180 &	 -1/180 &	 0 &	 -1/90 &	 -1/90 &	 0 &	 0 &	 0 &	 0 \\
$H_{26}$ 	& 0 &	 0 &	 0 &	 0 &	 0 &	 0 &	 0 &	 0 &	 0 &	 0 \\
$H_{27}$ 	& 0 &	 -1/180 &	 -1/180 &	 0 &	 0 &	 0 &	 -1/180 &	 -1/180 &	 0 &	 0 \\
$H_{28}$ 	& 0 &	 -1/90 &	 -1/90 &	 0 &	 0 &	 0 &	 0 &	 0 &	 1/90 &	 0 \\
$H_{29}$ 	& -1/90 &	 0 &	 0 &	 0 &	 0 &	 0 &	 0 &	 0 &	 0 &	 0 \\
$H_{30}$ 	& -1/180 &	 -1/180 &	 -1/180 &	 0 &	 0 &	 0 &	 -1/360 &	 -1/360 &	 0 &	 0 \\
$H_{31}$ 	& -1/180 &	 -1/180 &	 1/90 &	 0 &	 -1/180 &	 1/90 &	 1/180 &	 -1/360 &	 0 &	 0 \\
$H_{32}$ 	& 0 &	 -1/60 &	 -1/60 &	 0 &	 0 &	 0 &	 0 &	 0 &	 0 &	 0 \\
$H_{33}$ 	& 0 &	 -1/90 &	 1/180 &	 0 &	 0 &	 0 &	 -1/360 &	 1/180 &	 0 &	 0 \\
$H_{34}$ 	& 0 &	 -1/90 &	 -1/90 &	 1/90 &	 0 &	 0 &	 0 &	 0 &	 -1/90 &	 0 \\
$H_{35}$ 	& 0 &	 0 &	 0 &	 0 &	 0 &	 0 &	 0 &	 0 &	 0 &	 0 \\
$H_{36}$ 	& 0 &	 -1/180 &	 1/90 &	 0 &	 0 &	 0 &	 1/90 &	 -1/180 &	 0 &	 0 \\
$H_{37}$ 	& 0 &	 0 &	 0 &	 0 &	 0 &	 0 &	 -1/180 &	 -1/180 &	 0 &	 0 \\
$H_{38}$ 	& 0 &	 1/90 &	 1/90 &	 0 &	 0 &	 0 &	 0 &	 0 &	 0 &	 0 \\
$H_{39}$ 	& 0 &	 -1/180 &	 -1/180 &	 0 &	 0 &	 0 &	 0 &	 0 &	 0 &	 0 \\
$H_{40}$ 	& 0 &	 0 &	 0 &	 -1/90 &	 0 &	 0 &	 0 &	 0 &	 0 &	 0 \\
$H_{41}$ 	& 0 &	 0 &	 0 &	 0 &	 0 &	 0 &	 0 &	 0 &	 0 &	 0 \\
$H_{42}$ 	& 0 &	 0 &	 0 &	 -1/90 &	 0 &	 0 &	 0 &	 0 &	 0 &	 0 \\
$H_{43}$ 	& -1/90 &	 0 &	 0 &	 0 &	 1/90 &	 -1/180 &	 0 &	 0 &	 0 &	 0 \\
$H_{44}$ 	 &1/180 &	 0 &	 0 &	 -1/180 &	 0 &	 0 &	 0 &	 0 &	 0 &	 0 \\
$H_{45}$ 	& -1/180 &	 0 &	 0 &	 -1/90 &	 0 &	 0 &	 1/180 &	 1/180 &	 0 &	 0 \\
$H_{46}$ 	& -1/90 &	 0 &	 0 &	 0 &	 0 &	 0 &	 0 &	 0 &	 1/45 &	 0 \\
$H_{47}$ 	& 0 &	 0 &	 0 &	 0 &	 0 &	 0 &	 0 &	 0 &	 0 &	 0 \\
$H_{48}$ 	& 0 &	 0 &	 0 &	 0 &	 0 &	 0 &	 0 &	 0 &	 0 &	 0 \\
$H_{49}$ 	& 0 &	 0 &	 0 &	 0 &	 0 &	 0 &	 0 &	 0 &	 0 &	 0 \\
$H_{50}$ 	& 0 &	 0 &	 0 &	 0 &	 -1/180 &	 -1/180 &	 -1/180 &	 -1/180 &	 0 &	 0 \\
$H_{51}$ 	& -1/180 &	 0 &	 0 &	 0 &	 -1/180 &	 -1/180 &	 -1/360 &	 -1/360 &	 0 &	 0 \\
$H_{52}$ 	& -1/180 &	 0 &	 0 &	 -1/180 &	 0 &	 0 &	 -1/360 &	 -1/360 &	 0 &	 0 \\
$H_{53}$ 	& -1/180 &	 0 &	 0 &	 0 &	 -1/180 &	 -1/180 &	 0 &	 0 &	 1/180 &	 0 \\
$H_{54}$ 	& 0 &	 0 &	 0 &	 0 &	 0 &	 0 &	 0 &	 0 &	 0 &	 0 \\
$H_{55}$ 	& 0 &	 0 &	 0 &	 0 &	 0 &	 0 &	 -1/180 &	 -1/180 &	 0 &	 0 \\
$H_{56}$ 	& 0 &	 -1/180 &	 -1/180 &	 0 &	 -1/90 &	 -1/90 &	 0 &	 0 &	 0 &	 0 \\
$H_{57}$ 	& -1/180 &	 -1/180 &	 -1/180 &	 0 &	 0 &	 0 &	 -1/360 &	 -1/360 &	 0 &	 0 \\
$H_{58}$ 	& 0 &	 -1/180 &	 -1/180 &	 0 &	 0 &	 0 &	 1/360 &	 -1/180 &	 1/180 &	 0 \\
$H_{59}$ 	& 0 &	 -1/180 &	 1/90 &	 0 &	 -1/90 &	 -1/90 &	 0 &	 0 &	 1/180 &	 0 \\
$H_{60}$ 	& 0 &	 -1/180 &	 1/90 &	 -1/180 &	 0 &	 0 &	 -1/180 &	 -1/180 &	 0 &	 0 \\
$H_{61}$ 	& 0 &	 -1/180 &	 -1/180 &	 1/90 &	 0 &	 0 &	 0 &	 0 &	 -1/45 &	 0 \\
$H_{62}$ 	& -1/180 &	 0 &	 0 &	 0 &	 0 &	 0 &	 -1/360 &	 -1/360 &	 0 &	 0 \\
$H_{63}$ 	& -1/180 &	 0 &	 0 &	 0 &	 0 &	 0 &	 1/360 &	 1/360 &	 0 &	 0 \\
$H_{64}$ 	& 0 &	 0 &	 0 &	 0 &	 0 &	 0 &	 1/90 &	 -1/180 &	 0 &	 0 \\
$H_{65}$ 	& 0 &	 0 &	 0 &	 1/90 &	 0 &	 0 &	 -1/360 &	 -1/360 &	 0 &	 0 \\
$H_{66}$ 	& 0 &	 -1/180 &	 -1/180 &	 0 &	 0 &	 0 &	 -1/180 &	 -1/180 &	 0 &	 0 \\
$H_{67}$ 	& 0 &	 -1/180 &	 -1/180 &	 0 &	 0 &	 0 &	 0 &	 0 &	 0 &	 0 \\
$H_{68}$ 	& 0 &	 -1/180 &	 1/90 &	 0 &	 1/90 &	 -1/180 &	 1/360 &	 -1/180 &	 0 &	 0 \\
$H_{69}$ 	& 1/90 &	 -1/90 &	 1/180 &	 0 &	 0 &	 0 &	 0 &	 0 &	 0 &	 0 \\
$H_{70}$ 	& 0 &	 -1/180 &	 -1/180 &	 1/90 &	 0 &	 0 &	 -1/360 &	 1/180 &	 0 &	 -1/120 \\
$H_{71}$ 	& 0 &	 0 &	 0 &	 0 &	 0 &	 0 &	 -1/180 &	 -1/180 &	 1/180 &	 0 \\
$H_{72}$ 	& 0 &	 0 &	 0 &	 1/90 &	 0 &	 0 &	 0 &	 0 &	 0 &	 -1/60 \\
$H_{73}$ 	& 0 &	 0 &	 0 &	 -1/60 &	 0 &	 0 &	 0 &	 0 &	 0 &	 0 \\
$H_{74}$ 	& -1/90 &	 1/90 &	 1/90 &	 -1/180 &	 0 &	 0 &	 0 &	 0 &	 0 &	 0 \\
$H_{75}$ 	& -1/45 &	 0 &	 0 &	 0 &	 0 &	 0 &	 0 &	 0 &	 0 &	 0 \\
$H_{76}$ 	& -1/90 &	 0 &	 0 &	 -1/90 &	 1/90 &	 1/90 &	 0 &	 0 &	 0 &	 0 \\
$H_{77}$ 	& -1/90 &	 0 &	 0 &	 0 &	 0 &	 0 &	 1/90 &	 -1/180 &	 0 &	 0 \\
$H_{78}$ 	& -1/90 &	 0 &	 0 &	 0 &	 0 &	 0 &	 0 &	 0 &	 0 &	 0 \\
$H_{79}$ 	& -1/90 &	 1/45 &	 -1/90 &	 0 &	 -1/180 &	 1/90 &	 0 &	 0 &	 0 &	 0 \\
$H_{80}$ 	& -1/180 &	 0 &	 0 &	 0 &	 -1/180 &	 -1/180 &	 -1/360 &	 -1/360 &	 0 &	 0 \\
$H_{81}$ 	& 1/90 &	 0 &	 0 &	 0 &	 0 &	 0 &	 1/360 &	 -1/180 &	 0 &	 0 \\
$H_{82}$ 	& -1/180 &	 -1/180 &	 -1/180 &	 0 &	 0 &	 0 &	 1/180 &	 -1/360 &	 1/180 &	 0 \\
$H_{83}$ 	& 1/90 &	 1/60 &	 -1/60 &	 0 &	 0 &	 0 &	 -1/360 &	 1/180 &	 0 &	 0 \\
$H_{84}$ 	& -1/90 &	 0 &	 0 &	 0 &	 -1/90 &	 1/45 &	 0 &	 0 &	 0 &	 0 \\
$H_{85}$ 	& -1/90 &	 0 &	 0 &	 0 &	 0 &	 0 &	 -1/180 &	 1/360 &	 -1/90 &	 0 \\
$H_{86}$ 	& -1/180 &	 -1/180 &	 -1/180 &	 0 &	 -1/180 &	 -1/180 &	 -1/360 &	 -1/360 &	 0 &	 0 \\
$H_{87}$ 	& 1/90 &	 -1/180 &	 -1/180 &	 -1/180 &	 -1/180 &	 1/90 &	 -1/360 &	 -1/360 &	 0 &	 0 \\
$H_{88}$ 	& 1/90 &	 -1/180 &	 -1/180 &	 0 &	 1/90 &	 -1/180 &	 0 &	 0 &	 -1/90 &	 -1/120 \\
$H_{89}$ 	& -1/180 &	 -1/180 &	 -1/180 &	 0 &	 1/180 &	 1/180 &	 -1/360 &	 -1/360 &	 0 &	 0 \\
$H_{90}$ 	& 1/90 &	 -1/180 &	 -1/180 &	 0 &	 0 &	 0 &	 -1/180 &	 1/90 &	 0 &	 -1/120 \\
$H_{91}$ 	& 1/45 &	 -1/90 &	 -1/90 &	 0 &	 0 &	 0 &	 0 &	 0 &	 -1/90 &	 0 \\
$H_{92}$ 	& 0 &	 0 &	 0 &	 0 &	 0 &	 0 &	 0 &	 0 &	 0 &	 0 \\
$H_{93}$ 	& 0 &	 0 &	 0 &	 0 &	 0 &	 0 &	 0 &	 0 &	 0 &	 0 \\
$H_{94}$ 	& 0 &	 -1/180 &	 -1/180 &	 0 &	 -1/90 &	 -1/90 &	 0 &	 0 &	 0 &	 0 \\
$H_{95}$ 	& 0 &	 1/90 &	 -1/180 &	 0 &	 0 &	 0 &	 1/90 &	 -1/180 &	 0 &	 0 \\
$H_{96}$ 	& 0 &	 -1/180 &	 1/90 &	 0 &	 -1/90 &	 -1/90 &	 0 &	 0 &	 1/180 &	 0 \\
$H_{97}$ 	& 0 &	 1/90 &	 -1/180 &	 1/90 &	 0 &	 0 &	 -1/180 &	 -1/180 &	 0 &	 0 \\
$H_{98}$ 	& 0 &	 0 &	 0 &	 0 &	 0 &	 0 &	 0 &	 0 &	 -1/30 &	 0 \\
$H_{99}$ 	& 0 &	 0 &	 0 &	 -1/60 &	 0 &	 0 &	 0 &	 0 &	 0 &	 0 \\
$H_{100}$ &	 0 &	 -1/90 &	 -1/90 &	 0 &	 0 &	 0 &	 0 &	 0 &	 1/90 &	 0 \\
$H_{101}$ &	 0 &	 2/45 &	 -1/45 &	 0 &	 0 &	 0 &	 0 &	 0 &	 0 &	 0 \\
$H_{102}$ &	 0 &	 -1/90 &	 -1/90 &	 0 &	 -1/90 &	 -1/90 &	 0 &	 0 &	 0 &	 0 \\
$H_{103}$ &	 0 &	 1/45 &	 -1/90 &	 0 &	 1/90 &	 -1/180 &	 -1/180 &	 -1/180 &	 0 &	 0 \\
$H_{104}$ &	 0 &	 -1/90 &	 -1/90 &	 0 &	 1/45 &	 -1/90 &	 0 &	 0 &	 -1/90 &	 0 \\
$H_{105}$ &	 0 &	 -1/90 &	 -1/90 &	 0 &	 0 &	 0 &	 0 &	 0 &	 -1/90 &	 0 \\
$H_{106}$ &	 0 &	 1/45 &	 -1/90 &	 -1/180 &	 -1/90 &	 -1/90 &	 0 &	 0 &	 0 &	 0 \\
$H_{107}$ &	 0 &	 0 &	 0 &	 -1/45 &	 0 &	 0 &	 0 &	 0 &	 0 &	 0 \\
$H_{108}$ &	 0 &	 0 &	 0 &	 -1/45 &	 0 &	 0 &	 0 &	 0 &	 0 &	 0 \\
$H_{109}$ &	 0 &	 0 &	 0 &	 -1/90 &	 0 &	 0 &	 0 &	 0 &	 0 &	 0 \\
$H_{110}$ &	 0 &	 0 &	 0 &	 -1/180 &	 0 &	 0 &	 -1/180 &	 -1/180 &	 0 &	 0 \\
$H_{111}$ &	 0 &	 0 &	 0 &	 -1/180 &	 -1/180 &	 -1/180 &	 1/180 &	 -1/360 &	 0 &	 0 \\
$H_{112}$ &	 0 &	 0 &	 0 &	 -1/90 &	 0 &	 0 &	 0 &	 0 &	 0 &	 0 \\
$H_{113}$ &	 0 &	 0 &	 0 &	 -1/180 &	 -1/180 &	 -1/180 &	 1/180 &	 -1/360 &	 0 &	 0 \\
$H_{114}$ &	 0 &	 0 &	 0 &	 -1/90 &	 0 &	 0 &	 1/360 &	 -1/180 &	 0 &	 0 \\
$H_{115}$ &	 0 &	 0 &	 0 &	 -1/180 &	 -1/90 &	 -1/90 &	 0 &	 0 &	 -1/90 &	 0 \\
$H_{116}$ &	 0 &	 0 &	 0 &	 -1/180 &	 -1/90 &	 1/180 &	 -1/180 &	 -1/180 &	 0 &	 0 \\
$H_{117}$ &	 0 &	 0 &	 0 &	 -1/180 &	 -1/180 &	 -1/180 &	 -1/180 &	 1/360 &	 -1/90 &	 0 \\
$H_{118}$ &	 0 &	 0 &	 0 &	 1/180 &	 0 &	 0 &	 -1/180 &	 -1/180 &	 0 &	 0 \\
$H_{119}$ &	 0 &	 0 &	 0 &	 1/90 &	 0 &	 0 &	 -1/360 &	 -1/360 &	 -1/90 &	 -1/120 \\
$H_{120}$ &	 0 &	 0 &	 0 &	 -1/90 &	 0 &	 0 &	 0 &	 0 &	 0 &	 0 \\
$H_{121}$ &	 0 &	 0 &	 0 &	 -1/180 &	 -1/180 &	 1/90 &	 -1/180 &	 1/360 &	 0 &	 0 \\
$H_{122}$ &	 0 &	 0 &	 0 &	 -1/180 &	 0 &	 0 &	 -1/360 &	 -1/360 &	 0 &	 0 \\
$H_{123}$ &	 0 &	 0 &	 0 &	 -1/180 &	 0 &	 0 &	 -1/180 &	 -1/180 &	 -1/90 &	 0 \\
$H_{124}$ &	 0 &	 0 &	 0 &	 -1/180 &	 0 &	 0 &	 -1/90 &	 1/180 &	 0 &	 0 \\
$H_{125}$ &	 0 &	 0 &	 0 &	 1/180 &	 0 &	 0 &	 -1/360 &	 -1/360 &	 0 &	 -1/120 \\
$H_{126}$ &	 0 &	 0 &	 0 &	 -1/180 &	 0 &	 0 &	 -1/180 &	 -1/180 &	 1/180 &	 0 \\
$H_{127}$ &	 0 &	 0 &	 0 &	 -1/180 &	 1/180 &	 -1/90 &	 -1/360 &	 -1/360 &	 0 &	 -1/120 \\
$H_{128}$ &	 0 &	 0 &	 0 &	 1/90 &	 0 &	 0 &	 0 &	 0 &	 -1/90 &	 -1/60 \\
$H_{129}$ &	 0 &	 0 &	 0 &	 0 &	 0 &	 0 &	 -1/180 &	 -1/180 &	 0 &	 0 \\
$H_{130}$ &	 0 &	 0 &	 0 &	 0 &	 0 &	 0 &	 0 &	 0 &	 0 &	 0 \\
$H_{131}$ &	 0 &	 0 &	 0 &	 0 &	 -1/180 &	 -1/180 &	 -1/180 &	 -1/180 &	 0 &	 0 \\
$H_{132}$ &	 0 &	 0 &	 0 &	 0 &	 0 &	 0 &	 1/180 &	 -1/90 &	 0 &	 0 \\
$H_{133}$ &	 0 &	 0 &	 0 &	 0 &	 -1/180 &	 -1/180 &	 -1/360 &	 -1/360 &	 1/180 &	 0 \\
$H_{134}$ &	 0 &	 0 &	 0 &	 0 &	 -1/60 &	 0 &	 -1/180 &	 -1/180 &	 0 &	 0 \\
$H_{135}$ &	 0 &	 0 &	 0 &	 0 &	 0 &	 -1/60 &	 0 &	 0 &	 0 &	 -1/60 \\
$H_{136}$ &	 0 &	 0 &	 0 &	 0 &	 -1/90 &	 -1/90 &	 -1/180 &	 -1/180 &	 -1/90 &	 0 \\
$H_{137}$ &	 0 &	 0 &	 0 &	 0 &	 1/90 &	 -1/180 &	 -1/120 &	 0 &	 0 &	 -1/120 \\
$H_{138}$ &	 0 &	 0 &	 0 &	 0 &	 -1/180 &	 -1/180 &	 -1/180 &	 1/360 &	 -1/45 &	 0 \\
$H_{139}$ &	 0 &	 0 &	 0 &	 0 &	 -1/180 &	 -1/180 &	 -1/360 &	 -1/360 &	 -1/90 &	 -1/120 \\
$H_{140}$ &	 0 &	 0 &	 0 &	 0 &	 0 &	 0 &	 -1/180 &	 1/90 &	 0 &	 -1/60 \\
$H_{141}$ &	 0 &	 0 &	 0 &	 0 &	 0 &	 0 &	 -1/90 &	 -1/90 &	 0 &	 0 \\
$H_{142}$ &	0 &	0 &	0 &	0 &	0 &	0 &	0 &	0 &	 -2/45 & 	0 \\
\end{longtable}

\newpage
\begin{longtable}{ccccc|cc}
\caption{The last six vectors}\label{tb-claw2}\\
\toprule
&	 $\unlab{w_B\cdot w_B}{\sigma_B}$
 &	$\unlab{w'_B\cdot w'_B}{\sigma_B}$
&	 $\unlab{w_C\cdot w_C}{\sigma_C}$
 &	$\unlab{w'_C\cdot w'_C}{\sigma_C}$
&	 $w_3$
&	 $w_0$ \\
\midrule
 \endfirsthead
\multicolumn{7}{c}%
{\tablename\ \thetable\ -- \textit{Continued from previous page}} \\
\toprule
&	 $\unlab{w_B\cdot w_B}{\sigma_B}$
 &	$\unlab{w'_B\cdot w'_B}{\sigma_B}$
&	 $\unlab{w_C\cdot w_C}{\sigma_C}$
 &	$\unlab{w'_C\cdot w'_C}{\sigma_C}$
&	 $w_3$
&	 $w_0$ \\
\midrule
\endhead
\bottomrule \multicolumn{7}{r}{\textit{Continued on next page}} \\
\endfoot
\bottomrule
\endlastfoot
$H_{1}$ 	&	 0 &	0 &	0 &	 0 &	 0 &	 0 	\\
$H_{2}$ 	&	 0 &	 0 &	 0 &	 0 &	 0 &	 0 	\\
$H_{3}$ 	&	 0 &	 0 &	 0 &	 0 &	 0 &	 0 	\\
$H_{4}$ 	&	 0 &	 0 &	 0 &	 0 &	 1 &	$\frac{ -1563854392398577199}{6177034713075072}$ 	 \\
$H_{5}$ 	&	 0 &	 0 &	 0 &	 0 &	 0 &	 0 \\
$H_{6}$ 	&	 29161/60 &	 101524/15 &	 0 &	 0 &	 1 &	 -1 	\\
$H_{7}$ 	&	 0 &	 0 &	 0 &	 0 &	 0 &	 0 	\\
$H_{8}$ 	&	 0 &	 0 &	 2000 &	 20 &	 0 &	 0	\\
$H_{9}$ 	&	 0 &	 0 &	 -4000/3 &	 -40/3 &	 0 &	 0 	\\
$H_{10}$ 	&	 0 &	 0 &	 0 &	 0 &	 0 &	 0 	\\
$H_{11}$ 	&	 1815/2 &	 33640/3 &	 0 &	 0 &	 1 &	 $\frac{ -10173977739002723}{55152095652456 	}$ \\
$H_{12}$ 	&	 0 &	 0 &	 0 &	 0 &	 0 &	 0 	\\
$H_{13}$ 	&	 0 &	 0 &	 0 &	 0 &	 0 &	 0 	\\
$H_{14}$ 	&	 -242 &	 5104 &	 0 &	 0 &	 1 &	 $\frac{ -734882450141728337}{2316388017403152 	}$ \\
$H_{15}$ 	&	 0 &	 0 &	 0 &	 0 &	 0 &	 0 	\\
$H_{16}$ 	&	 -9922/15 &	 -422/3 &	 0 &	 0 &	 1 &	 $\frac{ -5722046702587908817}{37062208278450432 	}$ \\
$H_{17}$ 	&	 57013/60 &	 -65634/5 &	 0 &	 0 &	 1 &	 $\frac{ -57717650068438077139}{148248833113801728	}$ \\
$H_{18}$ 	&	 0 &	 0 &	 0 &	 0 &	 0 &	 0 \\
$H_{19}$ 	&	 0 &	 0 &	 0 &	 0 &	 1 &	 $\frac{ -703462682135213465}{3369291661677312 	}$ \\
$H_{20}$ 	&	 29161/60 &	 101524/15 &	 0 &	 0 &	 1 &	 -1 	\\
$H_{21}$ 	&	 781/2 &	 -37700/3 &	 0 &	 0 &	 1 &	 $\frac{ -5891700664190917297}{148248833113801728 	}$ \\
$H_{22}$ 	&	 -1804 &	 -580/3 &	 0 &	 0 &	 1 &	$\frac{  -32614888977443071}{18531104139225216 	}$ \\
$H_{23}$ 	&	 0 &	 0 &	 0 &	 0 &	 0 &	 0 	\\
$H_{24}$ 	&	 0 &	 0 &	 0 &	 0 &	 0 &	 0 	\\
$H_{25}$ 	&	 0 &	 0 &	 19723/20 &	 2047/20 &	 1 &	$\frac{  -15461491234942018543}{5929953324552069120 	}$ \\
$H_{26}$ 	&	 0 &	 0 &	 0 &	 0 &	 0 &	 0 	\\
$H_{27}$ 	&	 0 &	 0 &	 540 &	 77/3 &	 1 &	 $\frac{ -88140807390257339}{289548502175394 	}$ \\
$H_{28}$ 	&	 10/3 &	 105125/6 &	 0 &	 0 &	 1 &	 $\frac{ -35834405989042100849}{74124416556900864 }$ \\
$H_{29}$ 	&	 0 &	 0 &	 0 &	 0 &	 1 &	$\frac{  -1563854392398577199}{6177034713075072 }$ \\
$H_{30}$ 	&	 0 &	 0 &	 270 &	 77/6 &	 1 &	 $\frac{ -5456161234717178191}{12354069426150144 	}$ \\
$H_{31}$ 	&	 0 &	 0 &	 0 &	 0 &	 1 &	$\frac{  -4427934211353668633}{37062208278450432 	}$ \\
$H_{32}$ 	&	 0 &	 0 &	 2000 &	 20 &	 0 &	 0 	\\
$H_{33}$ 	&	 0 &	 0 &	 -1810/3 &	 -97/6 &	 1 &	 $\frac{ -1570031427111652271}{6177034713075072 	}$ \\
$H_{34}$ 	&	 -328/3 &	 725/3 &	 2000/3 &	 20/3 &	 1 &	 $\frac{ -327323049775204219}{6738583323354624 	}$ \\
$H_{35}$ 	&	 0 &	 0 &	 0 &	 0 &	 0 &	 0 	\\
$H_{36}$ 	&	 0 &	 0 &	 0 &	 0 &	 1 &	 $\frac{ -2040849950139277}{1323650295658944 	}$ \\
$H_{37}$ 	&	 0 &	 0 &	 2187/5 &	 5929/60 &	 1 &	 $\frac{ -324486989357699}{150414806324880 	}$ \\
$H_{38}$ 	&	 0 &	 0 &	 -4000/3 &	 -40/3 &	 0 &	 0 	\\
$H_{39}$ 	&	 177241/60 &	 99856/15 &	 0 &	 0 &	 1 &	 $\frac{ -14389006173379021}{6177034713075072 }$ \\
$H_{40}$ 	&	 53792/15 &	 10/3 &	 0 &	 0 &	 1 &	 -1 \\
$H_{41}$ 	&	 0 &	 0 &	 0 &	 0 &	 0 &	 0 	\\
$H_{42}$ 	&	 53792/15 &	 10/3 &	 0 &	 0 &	 1 &	 -1 	\\
$H_{43}$ 	&	 -242 &	 5104 &	 0 &	 0 &	 1 &	 $\frac{ -2444189262506217731}{32944185136400384 	}$ \\
$H_{44}$ 	&	 -19723/60 &	 8018 &	 0 &	 0 &	 1 &	$\frac{  -130980504818216225}{5294601182635776 	}$ \\
$H_{45}$ 	&	 19251/20 &	 -46426/5 &	 0 &	 0 &	 1 &	$\frac{  -767941410949255531}{4118023142050048 	}$ \\
$H_{46}$ 	&	 0 &	 0 &	 0 &	 0 &	 1 &	 $\frac{ -5219817337367791253}{37062208278450432 	}$ \\
$H_{47}$ 	&	 -4743/20 &	 -27388/5 &	 0 &	 0 &	 1 &	 $\frac{ -354709127257189891}{6177034713075072 	}$ \\
$H_{48}$ 	&	 0 &	 0 &	 0 &	 0 &	 0 &	 0 	\\
$H_{49}$ 	&	 0 &	 0 &	 0 &	 0 &	 0 &	 0 	\\
$H_{50}$ 	&	 0 &	 0 &	 0 &	 0 &	 1 &	 $\frac{ -102522009006261748933}{296497666227603456 	}$ \\
$H_{51}$ 	&	 0 &	 0 &	 4401/10 &	 -6853/60 &	 1 &	 $\frac{ -103816701767414115797}{592995332455206912 	}$ \\
$H_{52}$ 	&	 1331/4 &	 24476/3 &	 0 &	 0 &	 1 &	 $\frac{ -1794830760611264087}{5294601182635776	}$ \\
$H_{53}$ 	&	 -7157/30 &	 72838/15 &	 0 &	 0 &	 1 &	 $\frac{ -55226415700070668835}{296497666227603456 	}$ \\
$H_{54}$ 	&	 0 &	 0 &	 0 &	 0 &	 0 &	 0 	\\
$H_{55}$ 	&	 0 &	 0 &	 2187/5 &	 5929/60 &	 1 &	 $\frac{ -324486989357699}{150414806324880 	}$ \\
$H_{56}$ 	&	 0 &	 0 &	 1630/3 &	 -89/3 &	 1 &	 $\frac{ -148706888944854103}{561548610279552 	}$ \\
$H_{57}$ 	&	 0 &	 0 &	 270 &	 77/6 &	 1 &	 $\frac{ -5456161234717178191}{12354069426150144 	}$ \\
$H_{58}$ 	&	 0 &	 0 &	 0 &	 0 &	 1 &	 $\frac{ -74826771055029195907}{148248833113801728 	}$ \\
$H_{59}$ 	&	 0 &	 0 &	 0 &	 0 &	 1 &	 -1 	\\
$H_{60}$ 	&	 0 &	 0 &	 -540 &	 -77/3 &	 1 &	 $\frac{ -127346913837154513}{240663690119808 	}$ \\
$H_{61}$ 	&	 -93 &	 48430/3 &	 0 &	 0 &	 1 &	$\frac{  -10466732649688555}{5294601182635776 	}$ \\
$H_{62}$ 	&	 0 &	 0 &	 0 &	 0 &	 1 &	 $\frac{ -9320739160958665481}{37062208278450432 	}$ \\
$H_{63}$ 	&	 0 &	 0 &	 0 &	 0 &	 1 &	$\frac{  -62387193432797713}{37062208278450432 	}$ \\
$H_{64}$ 	&	 0 &	 0 &	 0 &	 0 &	 1 &	$\frac{  -217609023306544037}{1323650295658944 	}$ \\
$H_{65}$ 	&	 -34522/15 &	 316/3 &	 0 &	 0 &	 1 &	 -1 	\\
$H_{66}$ 	&	 0 &	 0 &	 540 &	 77/3 &	 1 &	 $\frac{ -88140807390257339}{289548502175394 	}$ \\
$H_{67}$ 	&	 177241/60 &	 99856/15 &	 0 &	 0 &	 1 &	$\frac{  -14389006173379021}{6177034713075072 	}$ \\
$H_{68}$ 	&	 0 &	 0 &	 -815/3 &	 89/6 &	 1 &	 $\frac{ -42621413028711205}{37062208278450432 	}$ \\
$H_{69}$ 	&	 4631/4 &	 -18328/3 &	 -1000/3 &	 -10/3 &	 1 &	$\frac{  -1330374174754201}{1029505785512512 	}$ \\
$H_{70}$ 	&	 -39153/20 &	 105544/15 &	 -270 &	 -77/6 &	 1 &	$\frac{  -5834645898385742195}{16472092568200192 	}$ \\
$H_{71}$ 	&	 0 &	 0 &	 0 &	 0 &	 1 &	 $\frac{ -3652233205897755459}{16472092568200192 	}$ \\
$H_{72}$ 	&	 -39153/10 &	 211088/15 &	 0 &	 0 &	 1 &	$\frac{  -40194399986166687563}{74124416556900864 	}$ \\
$H_{73}$ 	&	 17391/4 &	 114896/15 &	 0 &	 0 &	 1 &	$\frac{  -67376462435613401}{1323650295658944 }$ \\
$H_{74}$ 	&	 -3069/2 &	 -38744/3 &	 0 &	 0 &	 1 &	 -1 	\\
$H_{75}$ 	&	 -968 &	 20416 &	 0 &	 0 &	 1 &	$\frac{  -206704879201250857}{441216765219648 }$ \\
$H_{76}$ 	&	 -13706/15 &	 -92396/15 &	 0 &	 0 &	 1 &	$\frac{  -8722501888932923387}{16472092568200192 	}$ \\
$H_{77}$ 	&	 0 &	 0 &	 0 &	 0 &	 1 &	$\frac{  -703462682135213465}{1684645830838656 	}$ \\
$H_{78}$ 	&	 4631/2 &	 -36656/3 &	 0 &	 0 &	 1 &	$\frac{  -2050765293919679467}{18531104139225216 	}$ \\
$H_{79}$ 	&	 0 &	 0 &	 0 &	 0 &	 1 &	 -1 	\\
$H_{80}$ 	&	 0 &	 0 &	 0 &	 0 &	 1 &	$\frac{  -34340368851241376879}{98832555409201152 }$ \\
$H_{81}$ 	&	 -4631/15 &	 -13904/5 &	 0 &	 0 &	 1 &	 -1 \\
$H_{82}$ 	&	 0 &	 0 &	 0 &	 0 &	 1 &	 $\frac{ -10725188546965769537}{21178404730543104 	}$ \\
$H_{83}$ 	&	 0 &	 0 &	 -2810/3 &	 -39/2 &	 1 &	 -1 	\\
$H_{84}$ 	&	 0 &	 0 &	 0 &	 0 &	 1 &	$\frac{  -7417316739041385395}{18531104139225216 	}$ \\
$H_{85}$ 	&	 121/6 &	 -30595/3 &	 0 &	 0 &	 1 &	$\frac{  -21505715322664188433}{74124416556900864 	}$ \\
$H_{86}$ 	&	 0 &	 0 &	 815/3 &	 -89/6 &	 1 &	$\frac{  -10051575074463385}{18384031884152	}$ \\
$H_{87}$ 	&	 1331/4 &	 24476/3 &	 270 &	 77/6 &	 1 &	$\frac{  -2065655974432544177}{9265552069612608 	}$ \\
$H_{88}$ 	&	 -8657/30 &	 -194687/15 &	 -815/3 &	 89/6 &	 1 &	$\frac{  -10513889487465286471}{21178404730543104	}$ \\
$H_{89}$ 	&	 0 &	 0 &	 270 &	 77/6 &	 1 &	$\frac{  -102492676367157469795}{296497666227603456 	}$ \\
$H_{90}$ 	&	 4631/4 &	 -18328/3 &	 -270 &	 -77/6 &	 1 &	$\frac{  -3470421686575164043}{148248833113801728 	}$ \\
$H_{91}$ 	&	 121/3 &	 -61190/3 &	 2000/3 &	 20/3 &	 1 &	 $\frac{ -12539057139644285}{8236046284100096 }$ \\
$H_{92}$ 	&	 0 &	 0 &	 0 &	 0 &	 0 &	 0 	\\
$H_{93}$ 	&	 0 &	 0 &	 0 &	 0 &	 0 &	 0 	\\
$H_{94}$ 	&	 0 &	 0 &	 19723/20 &	 2047/20 &	 1 &	 $\frac{ -15461491234942018543}{5929953324552069120 	}$ \\
$H_{95}$ 	&	 0 &	 0 &	 0 &	 0 &	 1 &	 $\frac{ -2040849950139277}{1323650295658944 	}$ \\
$H_{96}$ 	&	 0 &	 0 &	 -1630/3 &	 89/3 &	 1 &	 $\frac{ -37632094249561791565}{148248833113801728 	}$ \\
$H_{97}$ 	&	 0 &	 0 &	 -540 &	 -77/3 &	 1 &	 $\frac{ -147229716847699567}{9265552069612608 	}$ \\
$H_{98}$ 	&	 0 &	 0 &	 0 &	 0 &	 1 &	 $\frac{ -4163309017023671941}{24708138852300288 	}$ \\
$H_{99}$ 	&	 77841/20 &	 111556/5 &	 0 &	 0 &	 1 &	 -1 	\\
$H_{100}$ 	&	 10/3 &	 105125/6 &	 0 &	 0 &	 1 &	$\frac{  -35834405989042100849}{74124416556900864 }$ \\
$H_{101}$ 	&	 0 &	 0 &	 -4000/3 &	 -40/3 &	 0 &	 0 	\\
$H_{102}$ 	&	 0 &	 0 &	 1630/3 &	 -89/3 &	 1 &	 $\frac{ -10943236189159518679}{18531104139225216 }$ \\
$H_{103}$ 	&	 0 &	 0 &	 -1630/3 &	 89/3 &	 1 &	$\frac{  -244304794290394685}{32944185136400384	}$ \\
$H_{104}$ 	&	 0 &	 0 &	 -1630/3 &	 89/3 &	 1 &	$\frac{  -6580270239524616359}{10589202365271552	}$ \\
$H_{105}$ 	&	 0 &	 0 &	 0 &	 0 &	 1 &	 $\frac{ -17483526616286112727}{24708138852300288 }$ \\
$H_{106}$ 	&	 0 &	 0 &	 -1630/3 &	 89/3 &	 1 &	 $\frac{ -2389728277891266261}{8236046284100096 	}$ \\
$H_{107}$ 	&	 107584/15 &	 20/3 &	 0 &	 0 &	 1 &	 -2 	\\
$H_{108}$ 	&	 30504/5 &	 1336/3 &	 0 &	 0 &	 1 &	 $\frac{ -66935245670393753}{661825147829472 	}$ \\
$H_{109}$ 	&	 1815/2 &	 33640/3 &	 0 &	 0 &	 1 &	 $\frac{ -10173977739002723}{55152095652456 	}$ \\
$H_{110}$ 	&	 0 &	 0 &	 0 &	 0 &	 1 &	 $\frac{ -866621514187196297}{2059011571025024 	}$ \\
$H_{111}$ 	&	 4631/4 &	 -18328/3 &	 0 &	 0 &	 1 &	$\frac{  -7493427555720786047}{26954333293418496 	}$ \\
$H_{112}$ 	&	 -1804 &	 -580/3 &	 0 &	 0 &	 1 &	$\frac{  -4771933910371470719}{9265552069612608 	}$ \\
$H_{113}$ 	&	 -34522/15 &	 316/3 &	 0 &	 0 &	 1 &	$\frac{  -56090180193586615063}{98832555409201152 }$ \\
$H_{114}$ 	&	 -3069/4 &	 -19372/3 &	 0 &	 0 &	 1 &	$\frac{  -6146435219180552237}{9265552069612608	}$ \\
$H_{115}$ 	&	 55 &	 -42050/3 &	 26569/60 &	 7921/60 &	 1 &	 $\frac{ -476307820942045182061}{1976651108184023040 	}$ \\
$H_{116}$ 	&	 0 &	 0 &	 0 &	 0 &	 1 &	 $\frac{ -19450524422641811549}{32944185136400384 }$ \\
$H_{117}$ 	&	 -93/2 &	 24215/3 &	 0 &	 0 &	 1 &	$\frac{  -8220306420511019599}{42356809461086208 	}$ \\
$H_{118}$ 	&	 -1804 &	 -580/3 &	 2187/5 &	 5929/60 &	 1 &	 $\frac{ -362958430331557939}{92655520696126080	}$ \\
$H_{119}$ 	&	 -70439/30 &	 8177 &	 0 &	 0 &	 1 &	 $\frac{ -15367554150816847711}{49416277704600576 }$ \\
$H_{120}$ 	&	 1815/2 &	 33640/3 &	 0 &	 0 &	 1 &	 $\frac{ -10173977739002723}{55152095652456 	}$ \\
$H_{121}$ 	&	 0 &	 0 &	 0 &	 0 &	 1 &	$\frac{  -1003343753899617143}{6177034713075072 	}$ \\
$H_{122}$ 	&	 4631/4 &	 -18328/3 &	 0 &	 0 &	 1 &	 $\frac{ -2082303347082636833}{9265552069612608 	}$ \\
$H_{123}$ 	&	 -328/3 &	 725/3 &	 0 &	 0 &	 1 &	$\frac{  -12006211264574547431}{24708138852300288 }$ \\
$H_{124}$ 	&	 0 &	 0 &	 0 &	 0 &	 1 &	$\frac{  -13765701958912919}{2059011571025024 	}$ \\
$H_{125}$ 	&	 -27249/10 &	 8684/15 &	 0 &	 0 &	 1 &	$\frac{  -75624575885139732659}{148248833113801728 	}$ \\
$H_{126}$ 	&	 55 &	 -42050/3 &	 0 &	 0 &	 1 &	 $\frac{ -70683455524198969843}{148248833113801728 	}$ \\
$H_{127}$ 	&	 0 &	 0 &	 0 &	 0 &	 1 &	$\frac{  -11080743495118222157}{21178404730543104 }$ \\
$H_{128}$ 	&	 170681/60 &	 103481/15 &	 0 &	 0 &	 1 &	 -1 	\\
$H_{129}$ 	&	 0 &	 0 &	 0 &	 0 &	 1 &	 $\frac{ -1157293995940733471}{4632776034806304 	}$ \\
$H_{130}$ 	&	 0 &	 0 &	 0 &	 0 &	 0 &	 0 	\\
$H_{131}$ 	&	 0 &	 0 &	 4401/5 &	 -6853/30 &	 1 &	 -1 	\\
$H_{132}$ 	&	 0 &	 0 &	 0 &	 0 &	 1 &	 $\frac{ -426427906114141689}{1029505785512512 	}$ \\
$H_{133}$ 	&	 421/6 &	 22910/3 &	 0 &	 0 &	 1 &	$\frac{  -1235497822172284283}{8984777764472832 	}$ \\
$H_{134}$ 	&	 0 &	 0 &	 4401/5 &	 -6853/30 &	 1 &	 $\frac{ -5617524380783071181}{32944185136400384 	}$ \\
$H_{135}$ 	&	 0 &	 0 &	 0 &	 0 &	 1 &	 $\frac{ -26428837039774952311}{42356809461086208 	}$ \\
$H_{136}$ 	&	 0 &	 0 &	 0 &	 0 &	 1 &	$\frac{  -73815219170205621743}{148248833113801728 	}$ \\
$H_{137}$ 	&	 0 &	 0 &	 0 &	 0 &	 1 &	$\frac{  -51576322752518046641}{296497666227603456 	}$ \\
$H_{138}$ 	&	 0 &	 0 &	 0 &	 0 &	 1 &	$\frac{  -37389454791911250173}{296497666227603456 	}$ \\
$H_{139}$ 	&	 421/6 &	 22910/3 &	 0 &	 0 &	 1 &	$\frac{  -17214181054154995319}{32944185136400384 	}$ \\
$H_{140}$ 	&	 0 &	 0 &	 0 &	 0 &	 1 &	$\frac{  -1542818265173952315}{8236046284100096 	}$ \\
$H_{141}$ 	&	 0 &	 0 &	 0 &	 0 &	 1 &	$\frac{  -1157293995940733471}{2316388017403152 	}$ \\
$H_{142}$ 	&	 20/3 &	 105125/3 &	0&	0&	1&	-1	\\
\end{longtable}
\end{center}
\end{document}